\theoremstyle{plain}
\newtheorem{thm}{Theorem}
\theoremstyle{definition}
\theoremstyle{remark}
\DeclareMathOperator*{\argmax}{arg\,max}
\newcommand\T{\rule{0pt}{2.6ex}}       
\newcommand\B{\rule[-1.2ex]{0pt}{0pt}} 
\def\bSig\mathbf{\Sigma}
\title[ Regularized matrix data clustering and its application to image analysis]{Regularized matrix data clustering and its application to image analysis}
\author{Xu Gao$^{1,*}$\email{xgao2@uci.edu}, 
Weining Shen$^{1,**}$\email{weinings@uci.edu}, and 
Hernando Ombao$^{2***}$\email{hernando.ombao@kaust.edu.sa } \\
$^{1}$Department of Statistics, University of California, Irvine, California, U.S.A. \\
$^{2}$Program on Applied Mathematics \& Computational Science, \\
King Abdullah University of Science and Technology, Saudi Arabia}
\begin{document}








\label{firstpage}


\begin{abstract}
In this paper, we propose a regularized mixture probabilistic model to cluster matrix data and apply it to brain signals. The approach is able to capture the sparsity (low rank, small/zero values) of the original signals by introducing regularization terms into the likelihood function. Through a modified EM algorithm, our method achieves the optimal solution with low computational cost. Theoretical results are also provided to establish the consistency of the proposed estimators. Simulations show the advantages of the proposed method over other existing methods. We also apply the approach to two real datasets from different experiments. Promising results imply that the proposed method successfully characterizes signals with different patterns while yielding insightful scientific interpretation.  
\end{abstract}

%

\begin{keywords}
	Mixture matrix normal; regularization; clustering; time frequency analysis; brain images, local field potentials.
\end{keywords}


\maketitle
\section{Introduction}
The past decade has witnessed the dramatic progress on technologies that generate high volume datasets. Among them, matrix data is popular and commonly encountered in brain signals and images, e.g., electroencephalography (EEG), functional magnetic resonance imaging (fMRI) and local field potentials (LFPs). In this paper, the goal is to provide a novel framework of analyzing matrix-valued data and apply it to LFPs. 

Clustering is a fundamental problem in both statistics and machine learning. In this study, we focus on probability model-based clustering approach as it provides likelihood that can be utilized to conduct statistical testing and model selection. In particular, we consider clustering for matrix-valued data. In a motivating example, researchers conducted an olfactory (non-spatial) sequence memory experiment to uncover the neuron learning process on the sequential ordering of odors \citep{allen2016nonspatial}. 12 electrodes were implanted into a rat's brain and LFPs were recorded. The entire experiment consists of 5 odors ABCDE with each corresponding to one epoch. As shown in Figure~\ref{intro_rat}, rats were trained to identify odors denoted by ABCDE with 12 electrodes implanted according to the schematic plot on the right. Preliminary analysis have been conducted to understand the association between the LFPs signals and the particular odor. Figure~\ref{fig:intro} presents the smoothed LFPs across 12 electrodes by different sequence odors and the mean signal. It can be found indisputably that the mean patterns vary dramatically across different odor, which motivates the study of analyzing ``latent" structures. To take one step further, if we compare the signals among different electrodes within each odor, strong spatial dependence can be easily detected. It shows that roughly two ``paradigm" can be found across electrodes especially in Sequence A, B and D. Typical cluster analysis can be done by directly lumping the signals over electrodes as vectors. However, the spatial dependence pattern would be accidently ignored in this case. This innegligible drawback inspires us to develop a statistical strategy directly on the ``matrices" that respect the ``row-wise" and ``column-wise" dependence simultaneuously. 

\begin{figure}[H]
	\centering
	\begin{tabular}{cc}
		\includegraphics[width = 0.45\textwidth, height = 0.2\textheight]{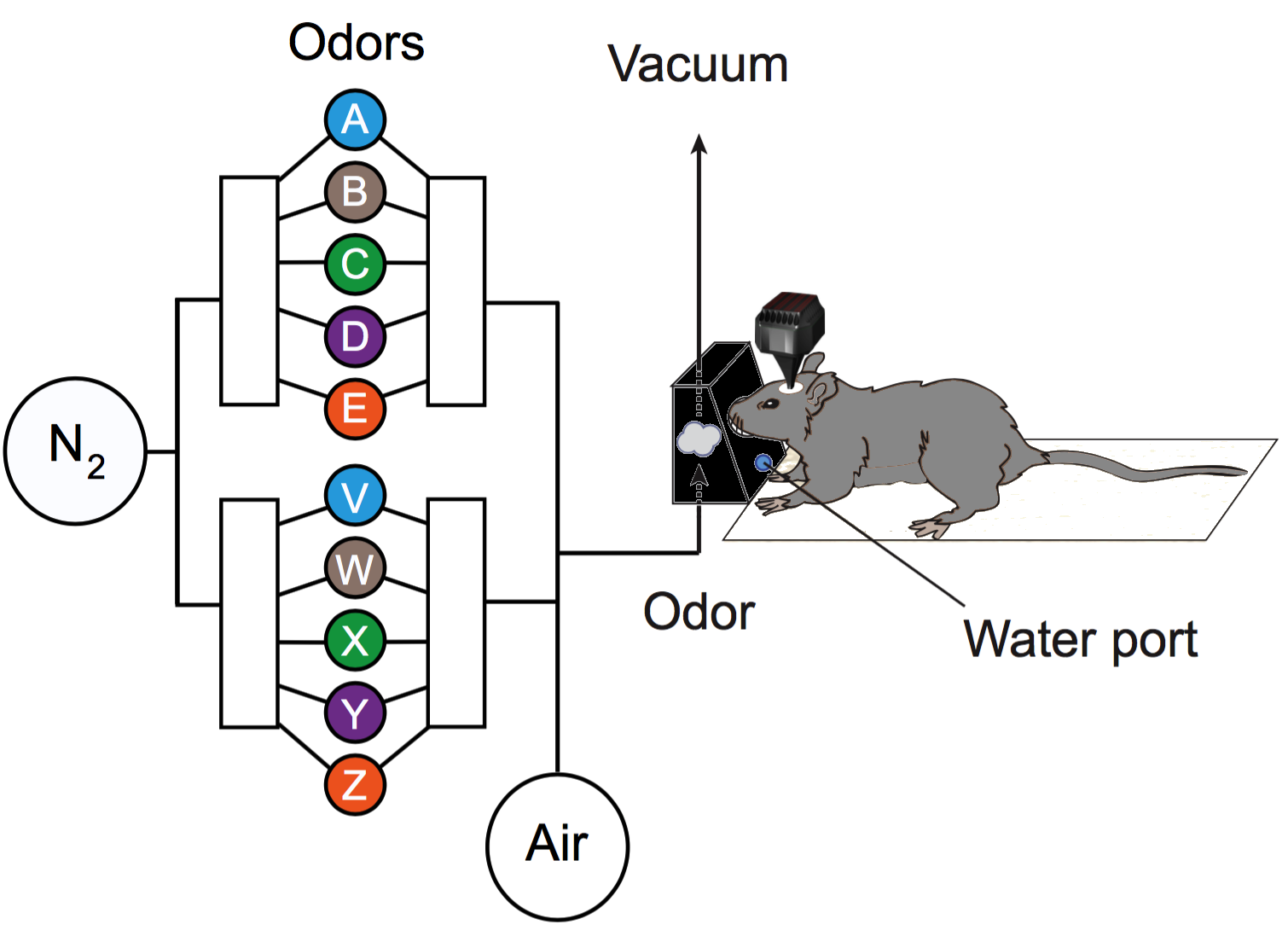} &\	\includegraphics[width = 0.4\textwidth, height = 0.25\textheight]{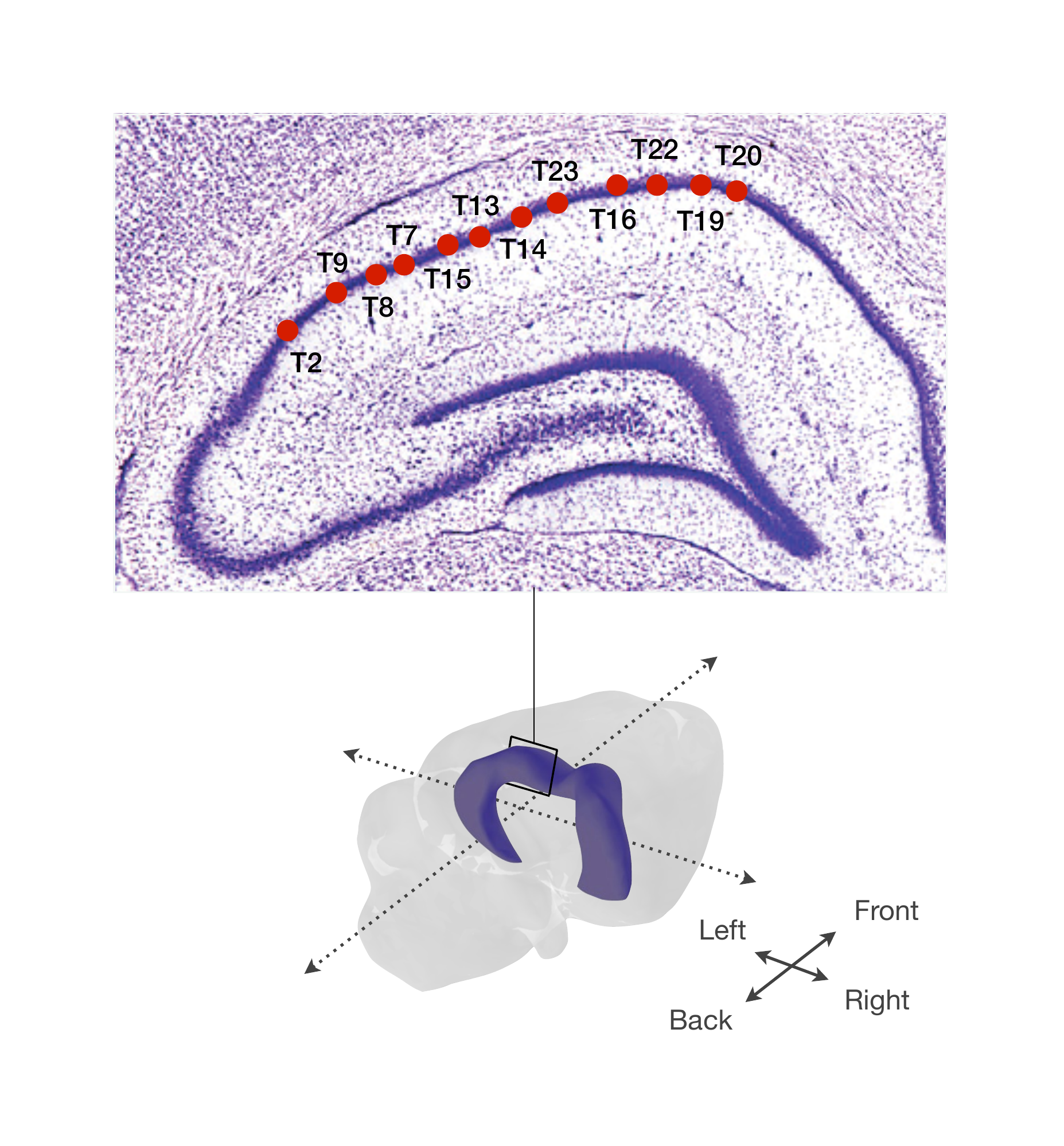}\\ 
	\end{tabular}
	\caption{Left: Apparatus and behavioral design for the olfaction (non-spatial) memory sequence experiment \citep{allen2016nonspatial}. Series of five odors were presented to rats from the same odor port. Each odor presentation was initiated by a nose poke. Rats were tested to correctly identify whether the odor was presented in the correct or incorrect sequence position (by holding their nose in the port until the signal or withdrawing before the signal, respectively). Right: The spatial locations of electrodes implanted in the hippocampus region. The experiment 
		and the data are reported in \citet{allen2016nonspatial}.}
	\label{intro_rat}
\end{figure}
\begin{figure}[H]
	\centering
	\begin{tabular}{cc}
		\includegraphics[width = .5\textwidth, height = 0.25\textheight]{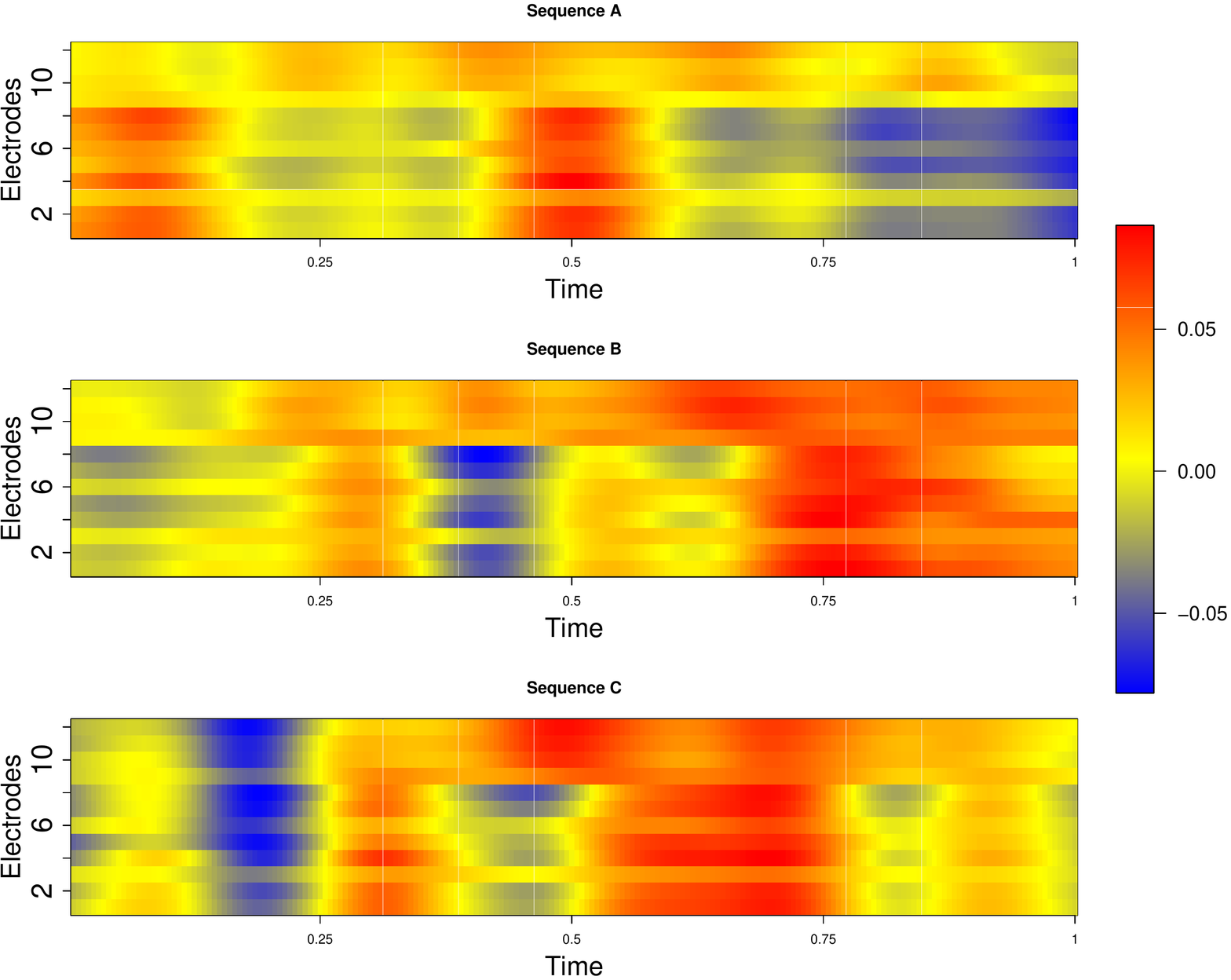}&
		\includegraphics[width = .5\textwidth, height = 0.25\textheight]{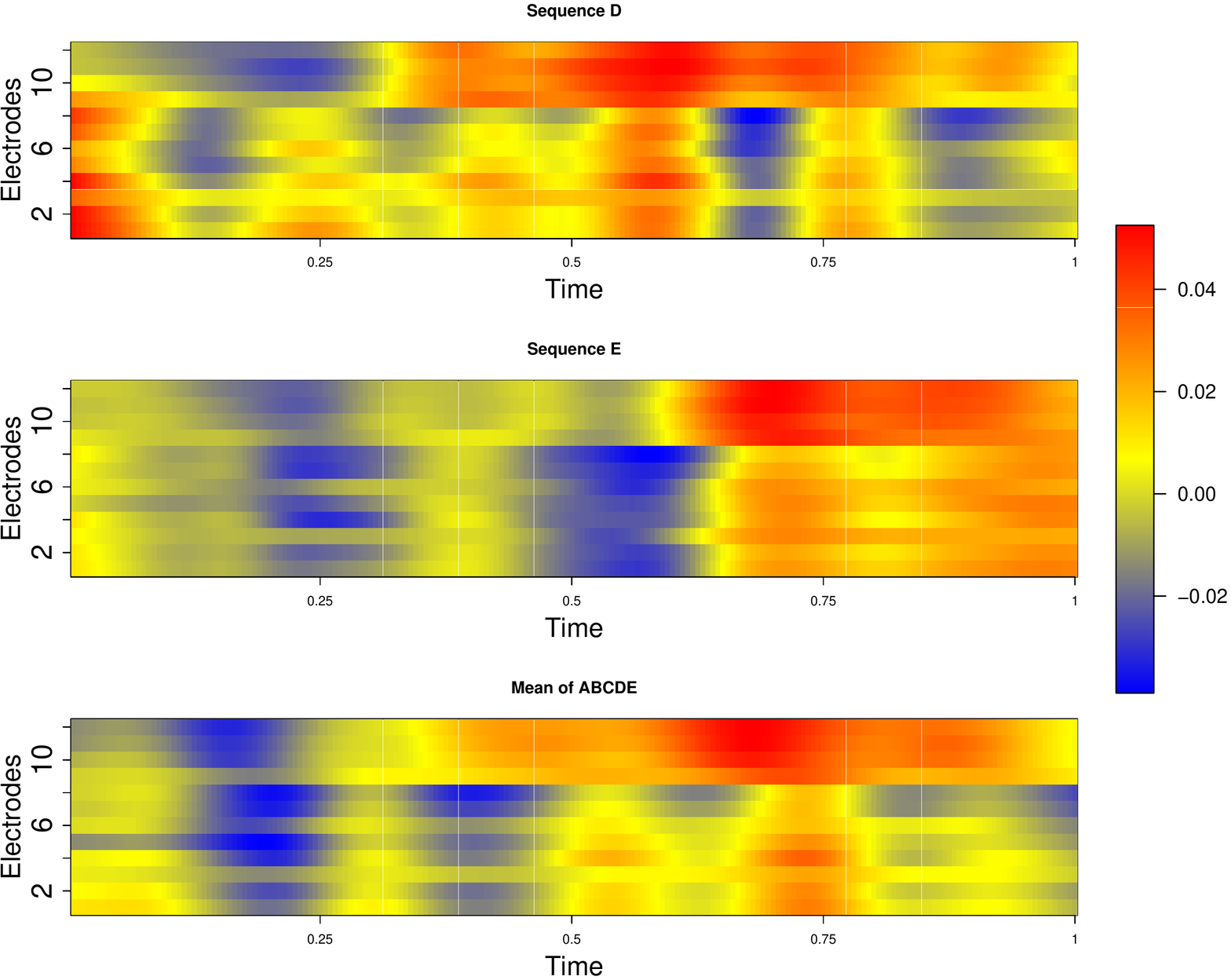}
	\end{tabular}
	\caption{The mean LFPs across different odors.}
	\label{fig:intro}
\end{figure}

Existing mixture models for clustering only handles vectors as variables, where we need to deal with matrix data. The major difficulties are three-fold: (i) we would like to take the matrix (spatial-temporal) structure into account in our modeling process, a simple vectorization of the matrix data will lead to information loss and incorrect interpretability; (ii) there's a common need to impose certain sparsity assumptions on each of the mixture group. Those assumptions are low rank, small valued entries and limited number of non-zero values; (iii) efficient computation and rigorous theoretical justification of the procedure is largely needed for such type of model. 

To solve the aforementioned issues, inspired by the work of \cite{dawid1981some} and \cite{dutilleul1999mle}, we consider a mixture model of matrix normal distributions, whose covariance matrices can be factorized into the Knocker product of two separate column and row covariance matrices. This representation provides both computational convenience and practical interpretation as it separates the variations into time and spatial domains. In addition, we consider three regularization approaches with different norms (e.g. $\ell_1, \ell_2$ and nuclear norm). In terms of computation, we introduce a new EM-type of algorithm that allows multiple regularizations in a unified approach. In theory, we show the strongly consistency of the proposed estimator using the technique in \citep{fan2001variable} with modification to accommodate the matrix-valued data.



The rest of the paper is organized as follows. In Section~\ref{section:BMND}, we mainly state some background knowledge of matrix normal distribution and the estimation method. In Section~\ref{section:PMMNC}, we introduce the proposed penalized mixture matrix normal model and its estimation approach based on modified Expectation Maximization (EM) and one-step-late algorithms. In Section~\ref{section:Theory}, we provide some theoretic results on the consistency of the (penalized) estimators in a restricted parameter space. In Sections~\ref{section:Simulation}, \ref{section:real1} and \ref{section:real2}, we present some simulation results and apply the proposed method to two LFPs dataset obtained from odor sequence and stroke experiments. 
\section{Background on Matrix Normal Distribution}
\label{section:BMND}
In this section, we mainly focus on a brief review of matrix normal distribution. In the field of modeling image or spatial-temporal data, it is natural to obtain a sequence of matrix valued observations $Y_1, Y_2, \cdots, Y_n$ with dimension $r \times p.$ For example, in the case of multiple spatial-temporal datasets, $p, r$ denotes the spatial and temporal attributes respectively. As an extension of vector-valued data, covariance structures regarding ``spatial" and ``temporal" need to be considered simultaneously. Following the convention of multivariate normal distribution for vectors, $r \times p$ matrix normal distribution $\textit{MN}_{r, p}(M, U, V)$ is defined as 
\begin{equation}
\label{equation:pdf}
f(Y \vert M, U, V) =  \frac{\exp (-\frac{1}{2}\text{tr}(V^{-1}(Y - M)^TU^{-1}(Y - M))}{(2\pi)^{rp/2}|V|^{r/2}|U|^{p/2}},
\end{equation}
where $M \in \mathbb{R}^{r \times p}$, $U \in  \mathbb{R}^{r \times r}$, $V \in  \mathbb{R}^{p \times p}$ and matrices $U$ and $V$ are treated as between and within covariance matrices. With some algebraic manipulations \citep{gupta1999matrix}, it can be shown that 
$Y \sim \textit{MN}_{r, p}(M, U, V)$ if and only if 
\begin{equation}vec(Y) \sim \textit{N}(vec(M), V \otimes U), \label{(dfn)} \end{equation}
where $vec$ is vectorization operation and $\otimes$ is the Kronecker product. It should be pointed that not all the multivariate normal random variable of dimension $r \times p$ is able to convert into matrix normal distribution. Only particular covariance matrices of dimension $rp$ that follow the form in (\ref{(dfn)}) has its corresponding matrix normal representation \citep{dutilleul1999mle}. Such ``separable" \citep{cressie2015statistics} pattern is widely used in the application of electrophysiological data analysis with traditional statistical methods such as state space model \citep{gao2016evolutionary}, vector autoregressive model \citep{derado2010modeling} etc. Moreover, \cite{reinsel1982multivariate} showed it lead to increased estimation accuracy and inferential power when incorporating such structure into analysis. 

\underline{On Estimating the Parameters}

Suppose that $Y_1, Y_2, \cdots, Y_n$ are i.i.d random samples from matrix normal distribution $\textit{MN}_{r, p}(M, U, V)$, the log-likelihood is given by 
\begin{equation}
\ell (M, U, V) = -\frac{npr}{2}\log 2\pi - \frac{nr}{2}\log |V| - \frac{np}{2}\log |U| - \frac{1}{2}\sum_{i=1}^{n}\text{tr}(V^{-1}(Y_i - M)^TU^{-1}(Y_i - M)).
\end{equation}
After some matrix derivatives manipulation, the maximum likelihood estimator (MLE) yields
\begin{align}
\begin{split}
\label{mle_mn}
\widehat{M} &= \sum_{i=1}^{n} Y_i = \bar{Y}\\
\widehat{U} &= \frac{1}{np} \sum_{i=1}^{n} (Y_i - \bar{Y}) \hat{V}^{-1} (Y_i - \bar{Y})'\\
\widehat{V} &= \frac{1}{nr} \sum_{i=1}^{n} (Y_i - \bar{Y})' \hat{U}^{-1} (Y_i - \bar{Y})
\end{split}
\end{align}
It is obvious that there are some identifiability issues since one can simply replace ${U}, {V}$ by $c{U}, \frac{1}{c}{V}$ to satisfy Equations~(\ref{mle_mn}) \citep{dutilleul1999mle}. However, the Kronecker product ${U} \otimes {V}$ will remain invariant and we will mainly focus on the mean parameter $M$ throughout this study. 

There is no close form for $\hat{U}, \hat{V}$. Alternatively, one can utilize iterative algorithms to achieve those values numerically. The algorithm is summarized as follows. Note that this approach is also used as an update step in Section~\ref{section:PMMNC}. 
\begin{algorithm}
	\caption{The MLE of covariance matrices}
	\label{algorithm:covariance}
	\textbf{Input}: $\mathbf{Y} = \{Y_1, Y_2, \cdots, Y_n\}$, $\tau$ (tolerance level), Max-iter\\
	\textbf{Initializing}: 
	$\text{iter} = 0$, $U_0 = \text{diag} (1, \cdots, 1)$, $V_0 = \frac{1}{nr} \sum_{i=1}^{n} (Y_i - \bar{Y})' {U_0}^{-1} (Y_i - \bar{Y})$\\
	$U_1 = \frac{1}{np} \sum_{i=1}^{n} (Y_i - \bar{Y}) {V_0}^{-1} (Y_i - \bar{Y})'$, $V_1 = \frac{1}{nr} \sum_{i=1}^{n} (Y_i - \bar{Y})' {U_1}^{-1} (Y_i - \bar{Y})$\\
	$\mathbf{While}$ (iter $<$ Max-iter or $||U_1 - U_0|| > \tau$ or $||V_1 - V_0|| > \tau$)\\
	$\mathbf{Repeat}\\$
	$U_0:=U_1$\\
	$V_0:=V_1$\\
	$U_1 = \frac{1}{np} \sum_{i=1}^{n} (Y_i - \bar{Y}) {V_0}^{-1} (Y_i - \bar{Y})'$\\
	$V_1 = \frac{1}{nr} \sum_{i=1}^{n} (Y_i - \bar{Y})' {U_1}^{-1} (Y_i - \bar{Y})$\\
	$\text{iter}:= \text{iter} + 1$\\
	$\mathbf{Return}$: $\hat{U}:=U_1$, $\hat{V}:= V_1$
\end{algorithm}
\newpage
\begin{remark}
Note that $||.||$ denotes the Frobenius norm. $\text{diag} (1, \cdots, 1)$ represents the identity matrix of dimension $r$.
\end{remark}

\section{Penalized Mixture Matrix Normal Clustering}
\label{section:PMMNC}
\subsection{Mixture Matrix Normal Models}
\label{section:MMNM}
Suppose the observed matrix-valued data $Y_1, \cdots, Y_n$ are obtained from a population with $k$ ``regimes". The probability density function is essentially a mixture of matrix normal densities. For simplicity, if we write $\Theta_j = (M_j, U_j, V_j),$ and the prior association densities as $\pi_j, j=1, \cdots, k,$ then the marginal density function of $Y_i$ can be written as 
\begin{equation}
f(Y_i \vert \Theta_1, \cdots, \Theta_k, \pi_1, \cdots, \pi_k) = \sum\limits_{j=1}^k\pi_jf(Y_i\vert \Theta_j),
\label{MMNM:mixturepdf}
\end{equation}
where $f(Y_i\vert \Theta_j)$ is shown in Equation~(\ref{equation:pdf}) and $\sum\limits_{j=1}^k\pi_j=1$.
The log-likelihood yields
\begin{equation}
\ell_{obs} (\Theta_1, \cdots, \Theta_k, \pi_1, \cdots, \pi_k)= \sum\limits_{i=1}^n \log \{ \sum\limits_{j=1}^k\pi_jf(Y_i\vert \Theta_j)\}.
\label{MMNM:obslikelihood}
\end{equation}
\underline{On Estimating the Parameters}

Expectation Maximization (EM) algorithm \citep{dempster1977maximum} can be efficiently used to estimate the parameters. In general, it is an iterative approach consisting of expectation (E) and maximization (M) steps. 

In the E-step, a posterior probability of observation $Y_i$ belongs to the $j-th$ cluster is calculated by Bayes Theorem that 
\begin{equation}
\label{equation:estep}
\alpha_{ij} = \frac{\pi_j f(Y_i|\Theta_j)}{\sum\limits_{l=1}^k\pi_l f(Y_i|\Theta_l)}.
\end{equation}
In the M-step, the estimates of the parameter vector are obtained by solving the non-constraint optimization problem that 
\begin{equation*}
\widehat{\Theta}_j = \argmax_{\Theta_j} \sum\limits_{i=1}^n\sum\limits_{j=1}^k\alpha_{ij}\log \{\pi_jf(Y_i\vert \Theta_j)\}
\end{equation*}
After some matrix derivatives and algebra manipulations, we can obtain the explicit solutions that 
\begin{align}
\label{equation:mstep}
\begin{split}
\hat{\pi}_j &= \frac{\sum_{i=1}^{n}\alpha_{ij}}{n}\\
\widehat{M}_j &= \frac{\sum_{i=1}^n\alpha_{ij}Y_i}{\sum\limits_{i=1}^n\alpha_{ij}}\\
\widehat{U}_j &= \frac{\sum_{i=1}^n\alpha_{ij}(Y_i - \widehat{M}_j)\widehat{V}_j^{-1}(Y_i - \widehat{M}_j)'}{p\sum\limits_{i=1}^n\alpha_{ij}}\\
\widehat{V}_j &= \frac{\sum_{i=1}^n\alpha_{ij}(Y_i - \widehat{M}_j)'\widehat{U}_j^{-1}(Y_i - \widehat{M}_j)}{r\sum\limits_{i=1}^n\alpha_{ij}}
\end{split}
\end{align}
Note that $\widehat{U}_j, \widehat{V}_j$ can be obtained numerically using the similar method to Algorithm~\ref{algorithm:covariance}. 

Although mixture matrix models are widely used in high dimensional data clustering analysis, they neglect the sparsity structures that are commonly encountered in applications, e.g. the illustrative example shown in Section~\ref{intro_rat}. We propose our penalized mixture model to account for this limitation in the following section. 

\subsection{Penalized Mixture Matrix Normal Models} 
It is quite common that we have some prior information on parameters $\Theta$. This could originate from the sparsity, rank, smoothness or a prior probability density on parameters \citep{green1990use}, which could simplify interpretation or the parametric structures.  
Specifically, the mean signals $M_j$ could be of low rank, sparse or small values. To this end, it is natural to add a regularization term to the likelihood and alternatively, maximum penalize likelihood estimate should be obtained.  Specifically, we define the penalized log-likelihood as
\begin{equation}
Q (\lambda, \Theta_1, \cdots, \Theta_k, \pi_1, \cdots, \pi_k)= \sum\limits_{i=1}^n \log \{ \sum\limits_{j=1}^k\pi_jf(Y_i\vert \Theta_j)\} - \lambda P(\Theta),
\label{eq:obspenalized}
\end{equation}
where $P(.)$ is some penalized function. Examples can be logarithm of probability density functions, $\ell_1, \ell_2$ norms, nuclear norm etc. 

\underline{On Estimating the Parameters}

Similar to the approach in Section~\ref{section:MMNM}, we propose a modified EM algorithm to estimate the parameters. The E-step can be easily achieved by Equation~(\ref{equation:estep}). The M-step boils down to the optimization problem where
\begin{equation}
\widehat{\Theta} = \argmax_{\Theta} \sum\limits_{i=1}^n\sum\limits_{j=1}^k\alpha_{ij}\log \{\pi_jf(Y_i\vert \Theta_j)\} - \lambda P(\Theta).
\end{equation}
In contrast to the case without penalty, the solution $\hat{\Theta}$ may not have an explicit form. \cite{lange1995gradient} proposed a gradient method related to EM algorithm. It replaces the M-step by conducting one iteration of Newton's method. Theoretic results on the convergence were also discussed. As an alternative approach, other methods including surrogate functions \citep{lange2000optimization}, overrelaxed EM algorithm \citep{yu2012monotonically} were introduced to this issue. 

Throughout this article, we mainly focus on three types of penalties: $\ell_1, \ell_2$ and nuclear norm. \cite{pan2007penalized} introduced $\ell_1$ penalty to the mean parameters in the setting of mixture univariate normal models. An explicit form of the M-step is derived using a sub-gradient. \cite{green1990use} developed  the ``one-step-late" (OSL) algorithm that can be applied to more general case. Inspire by the aforementioned results, we developed a sub-gradient update for $\ell_1$ norm and  OSL step for $\ell_2$ and nuclear norms. 

In the case of $\ell_1$ norm penalty, the update of $M_j$ is the optimal value that maximizes 
\begin{equation*}
\sum\limits_{i=1}^n\sum\limits_{j=1}^k\alpha_{ij}\log \{\pi_jf(Y_i\vert \Theta_j)\} - \lambda \sum\limits_{j=1}^k||M_j||_1.
\end{equation*}
Following a similar derivation by \cite{pan2007penalized}, the update step of $M_j$ has the form that

\begin{equation}
\label{equation: l1penalty}
\hat{M_j} = \text{sign}(\tilde{M}_j)(|\tilde{M}_j| - \frac{\lambda}{\sum_{i=1}^{n}{\alpha_{i,j}}} U_i  \mathbf{1}_{r\times p} V_i)_{+},
\end{equation}
where $\tilde{M}_j = \frac{\sum_{i=1}^{n}{\alpha_{i,j} Y_i}}{\sum_{i=1}^{n}{\alpha_{i,j}}}$ is the update for $M_i$ without penalty, $B_+ = max(B, 0)$, $\mathbf{1}_{r\times p}$ is a matrix of all 1's. sign() and $(.)_+$ are all component-wise operators. 

In the case of $\ell_2$ norm penalty, the objective function is derived to be 
\begin{equation*}
Q_{\ell_2} (\pi, \Theta)= \sum\limits_{i=1}^n\sum\limits_{j=1}^k\alpha_{ij}\log \{\pi_jf(Y_i\vert \Theta_j)\} - \lambda \sum\limits_{j=1}^k||M_j||_2.
\end{equation*}
After matrix derivative manipulations, we have
\begin{equation*}
\frac{\partial Q_{\ell_2} (\pi, \Theta)}{\partial M_j} = U_j^{-1}\sum_{i=1}^{n}{\alpha_{i,j}(Y_i - M_j)V_j^{-1}} - 2\lambda M_j,
\end{equation*}
The update step of $M_i$ follows the form
\begin{equation}
\label{equation:l2penalty}
\hat{M}_j = \tilde{M}_j - \frac{2\lambda}{\sum_{i=1}^n\alpha_{ij}} U_j M_j V_j,
\end{equation}
where $U_j, M_j, V_j$ are the update from the previous step. 

For the case of $nuclear$ norm penalty, similar derivation yields 
\begin{equation}
\label{equation:nuclearpenalty}
\hat{M}_j = \tilde{M}_j - \frac{\lambda}{\sum_{i=1}^n\alpha_{ij}} U_j \Phi_j \Omega_j' V_j,
\end{equation}
where $M_j$ has the singular value decomposition $M_j = \Phi_j \Lambda_j\Omega_j' $.

As a summary, the proposed estimation approach involves algorithms of initialization and alternating from E-step and M-step. Details are presented as follows

\noindent {\bf I. (Initialization)} We start with vectorizing the original matrix-valued observations $Y_1, \cdots, Y_n$ and applied k means to achieve the initial cluster membership values, written as $S_1, \cdots, S_k$, where $S_j = \{i ~|~Y_i ~\text{in j-th cluster}\}$. Note that we can relax this step by randomly assign clusters to those observations. Then for each cluster, the initial value of $\Theta_i$ can be obtained following the same manner in Section~\ref{section:BMND}. $\pi_j$ can be directly estimated by $\hat{\pi}_j = \frac{|S_j|}{n}$

\noindent {\bf II. (E-step)} 
We update the posterior membership by 
$$
\alpha_{ij} = \frac{\pi_j f(Y_i|\Theta_j)}{\sum\limits_{l=1}^k\pi_l f(Y_i|\Theta_l)}.
$$
\noindent {\bf III. (M-step)} 
The mean parameter $M_j$ with respect to various penalties is updated by the Equations~\eqref{equation: l1penalty}, \eqref{equation:l2penalty} and \eqref{equation:nuclearpenalty} respectively. Updates for $\pi_j, U_j, V_j$ follows Equations~\eqref{equation:mstep} and Algorithm~\ref{algorithm:covariance} is also utilized. 

\noindent {\bf IV. (Stopping criteria)} The iterative approach will alternate by {\bf I.} and {\bf II.} until certain iterations have been reached or the Frobenius norm change of the mean parameter $M_j$ is small enough. 

\underline{On Choosing the Number of Clusters}

A key question in the proposed method is to determine the number of clusters. Inspired by the approach proposed by \cite{smyth2000model}, we introduce cross validated penalized likelihood (CVPL) as the key measure. Without loss of generality, let us denote $f(.), f_k(.)$ as the ``true" and $k$ mixture probability density functions, $\Psi, \Psi_k$ as the corresponding parameters. We split the dataset $\bm{Y} = \{Y_1, \cdots, Y_n\}$ into training and testing groups denoted by $\bm{Y}_{train}, \bm{Y}_{test}.$ If we write the averaged penalized negative log-likelihood as 
\begin{equation}
\ell_k = -\frac{1}{N_{test}} \left(\ell_{obs}(\Psi_k(\bm{Y}_{train})|\bm{Y}_{test})  - \lambda P(\Psi_k) \right)
\end{equation}
It can be shown directly that 
\begin{equation}
E(\ell_k) = \int \log \frac{f(\bm{Y})}{\tilde{f}_k(\bm{Y})} f(\bm{Y}) d\bm{Y} + C,
\end{equation}
where $\tilde{f}_k(\bm{Y}) = \exp \{ \log f_k(\bm{Y}) -  \lambda P(\Psi_k)\}.$
It shows that the expectation of $\ell_k$ is the Kullbak-Leibler (KL) distance between $f(.)$ and the exponential penalized k mixture likelihood up to some constant. Derived from this result, we propose CVPL to determine the optimal number of clusters.

\section{Theory}
\label{section:Theory}
In this section, we first show some theoretic results on the consistency of the maximum likelihood estimator without regularizations. In order to guarantee a constrained (global) maximum likelihood formulation, we define the constrained parameter space $\Psi^{d_1, d_2}$ as 
\begin{align}
\begin{split}
&\Psi^{d_1, d_2} = \{\pi_1, \cdots, \pi_k \in \mathbb{R},~ M_1, \cdots, M_k \in \mathbb{R}^{r * p}, ~V_1 \otimes U_1, \cdots, V_k \otimes U_k \in \mathbb{R}^{rp * rp}: \\
&\min_{1\leq h \neq j \leq k} \rho (U_hU_j^{-1}) \geq d_1 > 0, 
\min_{1\leq h' \neq j' \leq k} \rho (V_{h'}V_{j'}^{-1}) \geq d_2 >0,~\sum_{i=1}^{k} \pi_i = 1, ~\pi_l > 0,  \\
&\rho(U_l)>0 , ~\rho(V_l) > 0~\text{for}~l = 1, \cdots, k\},
\end{split}
\label{theory:condition}
\end{align}
where $d_1, d_2 \in (0, 1]$, $\rho(.)$ denotes the minimum eigenvalue.  
\begin{thm}
	Let $Y_1, \cdots, Y_n$ be random samples from a mixture matrix normal distribution~\eqref{MMNM:mixturepdf}, then for $d_1, d_2 \in (0, 1]$, there exists a constrained global maximizer $\hat{\psi}^n$ of the log-likelihood~\eqref{MMNM:obslikelihood} over $\Psi^{d_1, d_2}$. Moreover, $\hat{\psi}^n$ is also strongly consistent in $\Psi^{d_1, d_2}$. 
	\label{thm:nopenalty}
\end{thm}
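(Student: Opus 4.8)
\section*{Proof proposal for Theorem~\ref{thm:nopenalty}}

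The plan is to establish the two claims---existence of a constrained global maximizer and its strong consistency---by reducing the matrix-normal mixture to an ordinary multivariate-normal mixture through the vectorization in~\eqref{(dfn)}, and then importing the constrained maximum-likelihood theory for normal mixtures in the style of Wald and Hathaway. Writing $\Sigma_j = V_j \otimes U_j$, each component density in~\eqref{MMNM:mixturepdf} equals the $N(vec(M_j), \Sigma_j)$ density evaluated at $vec(Y_i)$, so the log-likelihood~\eqref{MMNM:obslikelihood} is exactly that of a $k$-component mixture of $rp$-dimensional Gaussians. The key observation is that the feasibility constraints in~\eqref{theory:condition} translate into the Hathaway-type eigenvalue-ratio constraint on the full covariances: since the eigenvalues of $(V_h \otimes U_h)(V_j \otimes U_j)^{-1} = (V_h V_j^{-1}) \otimes (U_h U_j^{-1})$ are the products of the eigenvalues of the two factors, one has $\rho(\Sigma_h \Sigma_j^{-1}) \geq \rho(V_h V_j^{-1})\,\rho(U_h U_j^{-1}) \geq d_1 d_2 > 0$ for all $h \neq j$. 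Thus the constrained parameter space embeds into the classical constrained Gaussian-mixture space, and I would work throughout with the identifiable quantities $(\pi_j, vec(M_j), \Sigma_j)$.

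For existence, the obstacle is the well-known unboundedness of the unconstrained Gaussian-mixture likelihood: placing one mean at a data point and letting that component's covariance degenerate drives the likelihood to $+\infty$. I would show boundedness on the feasible set using the coupling that the ratio constraint provides. Fix a component whose smallest covariance eigenvalue tends to $0$; the ratio bound $\rho(\Sigma_h\Sigma_j^{-1})\ge d_1 d_2$ forces every other component's eigenvalues to shrink at a comparable rate, so no single term $\pi_j f(Y_i\mid\Theta_j)$ can blow up without the accompanying normalizing constants $|\Sigma_j|^{-1/2}$ being controlled across all components simultaneously; a term-by-term bound on $\log\{\sum_j \pi_j f(Y_i\mid\Theta_j)\}$ then yields a finite upper bound uniform over the feasible set. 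Existence of the maximizer follows by a compactness argument: the ratio constraints, together with the behaviour of the likelihood as eigenvalues or $\|M_j\|$ tend to $0$ or $\infty$ (coercivity to $-\infty$ off a compact region), let me restrict the search to a compact subset of the feasible set on which the continuous objective attains its supremum.

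For strong consistency, I would follow the Wald/Kiefer--Wolfowitz programme as adapted to constrained mixtures. The ingredients are: (i) identifiability of finite mixtures of Gaussians up to a permutation of labels, which transfers to the matrix-normal parameters $(\pi_j, M_j, \Sigma_j)$; (ii) an integrable envelope for $\log f(Y\mid\Theta)$ over the feasible set, supplied again by the eigenvalue-ratio constraints that bound the log-densities from above and by the Gaussian tails that bound them from below in expectation; and (iii) the fact that the population criterion $\Theta\mapsto E_{f}\log f_\Theta(Y)$ is uniquely maximized, modulo label-switching, at the true parameter, which is the statement that the Kullback--Leibler divergence vanishes only at the truth. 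Combining these via a uniform strong law of large numbers over the compactified feasible set forces every limit point of $\hat{\psi}^n$ to coincide with the truth almost surely, giving strong consistency in $\Psi^{d_1,d_2}$.

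I expect the main obstacle to be the first half of the existence argument---ruling out likelihood blow-up along covariance-degenerating sequences. In the vectorized Gaussian picture this is Hathaway's lemma, but here the degeneracy can occur through either the row factor $U_j$ or the column factor $V_j$, and one must verify that the two scalar bounds $d_1,d_2$ on the Kronecker factors jointly prevent degeneracy of the product covariance $\Sigma_j$ in every direction of $\mathbb{R}^{rp}$. Tracking how the $r+p$ eigenvalues of the factors combine into the $rp$ eigenvalues of $\Sigma_j$, and checking that the $c,1/c$ scale non-identifiability of $(U_j,V_j)$ does not interfere---it leaves $\Sigma_j$, and hence the likelihood, invariant---is the delicate bookkeeping that the matrix structure adds to the otherwise standard proof.
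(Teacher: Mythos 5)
Your proposal follows essentially the same route as the paper: vectorize via~\eqref{(dfn)}, use the Kronecker identity $\rho\bigl((V_hV_j^{-1})\otimes(U_hU_j^{-1})\bigr)=\rho(V_hV_j^{-1})\,\rho(U_hU_j^{-1})$ to show the constraints in~\eqref{theory:condition} imply the Hathaway-type ratio bound $\rho(\Sigma_h\Sigma_j^{-1})\geq d_1d_2$ on the full covariances, and then import the constrained Gaussian-mixture existence and strong-consistency theory of \cite{hathaway1985constrained}. The only structural difference is that the paper routes through the intermediate space $\widetilde{\Psi}^{d}$ with $d=d_1d_2$ and transfers the maximizer back to $\Psi^{d_1,d_2}$ by a compactness/sandwich argument, whereas you propose verifying the boundedness, coercivity, and Wald conditions directly on $\Psi^{d_1,d_2}$; the $(c_h U_h, V_h/c_h)$ rescaling bookkeeping you flag at the end is exactly the step the paper disposes of (rather tersely) in that transfer.
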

\begin{proof}
	First, we state the fact that 
	\begin{equation}
	\min_{1\leq h \neq j \leq k} \rho(\Sigma_h \Sigma_j^{-1}) \geq \min_{1\leq h \neq j \leq k} \rho(V_h V_j^{-1}) * \min_{1 \leq h' \neq j' \leq k} \rho(U_{h'} U_{j'}^{-1}), 
	\label{theory:fact}
	\end{equation}
	where $\Sigma_h = V_h \otimes U_h$.
	
	Actually, it follows directly from the property that 
	\begin{align*}
	\rho(\Sigma_h \Sigma_j^{-1}) & = \rho \big[(V_h \otimes U_h) (V_j \otimes V_h)^{-1} \big]\\
	& = \rho \big[ (V_h V_j^{-1}) \otimes (U_h U_j^{-1}) \big]\\
	& =  \rho(V_h V_j^{-1}) * \rho(U_h U_j^{-1}), 
	\end{align*}
	where the equalities follow the results in \cite{schacke2004kronecker}.
	We denote the parameter space $\widetilde{\Psi}^d$ as 
	\begin{align}
	\begin{split}
	&\widetilde{\Psi}^d = \{\pi_1, \cdots, \pi_k,~ M_1, \cdots, M_k, ~V_1 \otimes U_1, \cdots, V_k \otimes U_k: \min_{1\leq h \neq j \leq k} \rho (\Sigma_h\Sigma_j^{-1}) \geq d > 0,  \\
	& ~\sum_{i=1}^{k} \pi_i = 1, ~d_1 d_2 = d, ~\pi_l > 0, ~\rho(\Sigma_l)>0 ~\text{for}~l = 1, \cdots, k\},
	\end{split}
	\end{align}
	then due to the definition \eqref{(dfn)} and results in \citep{hathaway1985constrained}, there exists a global constraint maximizer of \eqref{MMNM:obslikelihood} $\hat{\psi}^n$ over $\widetilde{\Psi}^d$ so that $\ell_{obs} (\hat{\psi}^n)= \sup\limits_{\widetilde{\Psi}^d} \ell_{obs} (\psi)$ and there exists a compact set $S \in \widetilde{\Psi}^d$ such that $\hat{\psi}^n \in S$ and $\sup\limits_S \ell_{obs} (\psi) = \sup\limits_{\widetilde{\Psi}^d} \ell_{obs} (\psi).$ Moreover, the fact \eqref{theory:fact} implies that $\sup\limits_{\widetilde{\Psi}^d} \ell_{obs} (\psi) \geq \sup\limits_{{\Psi}^{d_1, d_2}} \ell_{obs} (\psi)$ for any $d_1, d_2.$ Due to the boundedness of $S$, it can be shown by contradiction that there exist $d_1, d_2$ so that $S \in \Psi^{d_1, d_2}.$ Thus, we have that 
	$\sup\limits_S \ell_{obs} (\psi) = \sup\limits_{\widetilde{\Psi}^d} \ell_{obs} (\psi) \geq \sup\limits_{{\Psi}^{d_1, d_2}} \ell_{obs} (\psi) \geq \sup\limits_S \ell_{obs} (\psi),$ which completes the proof of the first part. To show the strongly consistency, the same argument can be utilized as in \cite{hathaway1985constrained} with the fact of definition \eqref{(dfn)}.
\end{proof}

\begin{remark}
	Note that the preceding results hold for unidentifiable case resulting from \cite{hathaway1985constrained}.	
\end{remark}
\begin{remark}
	The condition in \eqref{theory:condition} is not easy to check in practice. One might bound all the eigenvalues within an interval $(a, b)$ for numerical stability.
\end{remark}
Next, we will show that under wild conditions, there also exists a root-n consistent penalized likelihood estimator of \eqref{eq:obspenalized}. We first define the parameter space denoted as $\widebar{\Psi}^{d_1, d_2}$ where
\begin{align}
\begin{split}
\widebar{\Psi}^{d_1, d_2} = \{& \pi_1, \cdots, \pi_k, M_1, \cdots, M_k, V_1 \otimes U_1, \cdots, V_k \otimes U_k \in \Psi^{d_1, d_2} : \frac{\sigma_i(U_h)}{\sigma_i(V_h)} = c_h~\text{for}~i = 1, \cdots, \min\{r, p\} \\
& h = 1, \cdots, k \},  
\end{split}
\label{condition:identifiability}
\end{align}
where $\sigma_i(U_h)$ denotes the $i$th eigenvalue of matrix $U_h$ and $c_h$ is a positive constant. 
\begin{thm}
	Let $Y_1, \cdots, Y_n$ be random samples from a mixture matrix normal distribution~\eqref{MMNM:mixturepdf},  in the case of $\ell_1$ and $\ell_2$ norm penalties, under condition (A) (in the appendix), if $\lambda  = O_p(n^{\eta}), ~0 < \eta \leq \frac{1}{2}$, then there exists a local maximizer $\hat{\zeta}$ of the penalized likelihood $\eqref{eq:obspenalized}$ such that $||\hat{\zeta} - \psi_0|| = O_p(n^{-1/2})$ in the parameter space $\widebar{\Psi}^{d_1, d_2},$ where $\psi_0$ is the true parameter in $\widebar{\Psi}^{d_1, d_2}.$
	\label{thm:penalty}
\end{thm}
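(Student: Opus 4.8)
The plan is to adapt the classical penalized-likelihood argument of \citet{fan2001variable} to the mixture matrix-normal setting, working throughout inside the restricted, identifiable parameter space $\widebar{\Psi}^{d_1,d_2}$ of \eqref{condition:identifiability}. The restriction plays two roles: the eigenvalue-separation bounds $d_1,d_2$ inherited from $\Psi^{d_1,d_2}$ keep the mixture components apart and bound the likelihood (as in Theorem~\ref{thm:nopenalty}), while the eigenvalue-ratio constraint $\sigma_i(U_h)/\sigma_i(V_h)=c_h$ removes the Kronecker-scale non-identifiability $(U_h,V_h)\mapsto(cU_h,c^{-1}V_h)$, so that $\psi_0$ is a well-defined interior point and the Fisher information can be nondegenerate. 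Write the penalized objective as $Q_n(\psi)=\ell_{obs}(\psi)-\lambda P(\psi)$, with $\ell_{obs}$ as in \eqref{MMNM:obslikelihood} and $P(\psi)=\sum_{j=1}^k\|M_j\|$ the chosen ($\ell_1$ or $\ell_2$) norm of the means. Following the device of \citet{fan2001variable}, it suffices to show that for every $\varepsilon>0$ there is a constant $C$ such that, with probability at least $1-\varepsilon$ for all large $n$,
\[
\sup_{\|u\|=C} Q_n\bigl(\psi_0+n^{-1/2}u\bigr) < Q_n(\psi_0),
\]
since continuity of $Q_n$ on the closed ball then forces a local maximizer $\hat\zeta$ in the interior $\{\psi_0+n^{-1/2}u:\|u\|\le C\}$, giving $\|\hat\zeta-\psi_0\|=O_p(n^{-1/2})$.

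Next I would Taylor-expand the unpenalized part along the ray $\psi=\psi_0+n^{-1/2}u$. Under the regularity conditions of condition (A)---finiteness of the relevant matrix derivatives of $\log f(Y\mid\Theta)$ and a finite, positive-definite Fisher information $I(\psi_0)$ at the interior truth---the standard expansion gives
\[
\ell_{obs}(\psi_0+n^{-1/2}u)-\ell_{obs}(\psi_0)
= n^{-1/2}u^{\top}\nabla\ell_{obs}(\psi_0)+\tfrac12 n^{-1}u^{\top}\nabla^2\ell_{obs}(\psi_0)\,u+R_n,
\]
where the remainder $R_n$ is controlled uniformly on $\|u\|\le C$ by the integrable third-derivative bound in condition (A). The score is a sum of $n$ i.i.d.\ mean-zero terms, so $n^{-1/2}\nabla\ell_{obs}(\psi_0)=O_p(1)$ and the first term is $O_p(1)\,\|u\|$; by the law of large numbers $n^{-1}\nabla^2\ell_{obs}(\psi_0)\to -I(\psi_0)$, so the second term converges to $-\tfrac12 u^{\top}I(\psi_0)u\le -c\|u\|^2$ for some $c>0$ by positive definiteness. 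Thus the unpenalized increment behaves like $O_p(1)\|u\|-c\|u\|^2$, a quadratic whose negative leading term dominates once $\|u\|$ is large.

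For the penalty increment I would exploit that $P$ is a norm, hence convex and Lipschitz in its argument, which lets me treat the $\ell_1$ and $\ell_2$ cases uniformly and sidestep the non-differentiability of $\ell_1$ at the origin. Writing $u_{M_j}$ for the sub-block of $u$ in the $M_j$ coordinates, the (reverse) triangle inequality yields
\[
\bigl|P(\psi_0+n^{-1/2}u)-P(\psi_0)\bigr|
\le \sum_{j=1}^k \bigl\|n^{-1/2}u_{M_j}\bigr\|
\le C' n^{-1/2}\|u\|,
\]
so that $-\lambda\bigl[P(\psi_0+n^{-1/2}u)-P(\psi_0)\bigr]=O_p(n^{\eta-1/2})\|u\|$. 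Because $0<\eta\le\tfrac12$ forces $n^{\eta-1/2}=O(1)$, this contribution is again only $O_p(1)\|u\|$, of the same linear order as the score term and therefore absorbed by the negative quadratic $-c\|u\|^2$.

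Collecting the three pieces, $Q_n(\psi_0+n^{-1/2}u)-Q_n(\psi_0)=O_p(1)\|u\|-c\|u\|^2$, so choosing $C$ large makes the right-hand side negative uniformly on $\|u\|=C$ with probability at least $1-\varepsilon$, which establishes the displayed inequality and hence the root-$n$ local maximizer. I expect the main obstacle to be not the expansion itself but the verification, within $\widebar{\Psi}^{d_1,d_2}$, that $I(\psi_0)$ is genuinely positive definite: mixture likelihoods are prone to singular or near-singular information, and one must additionally check that the matrix-normal Kronecker parametrization, once the scale constraint $\sigma_i(U_h)/\sigma_i(V_h)=c_h$ is imposed, yields a nondegenerate reparametrization so the quadratic term does not collapse. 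Confirming that condition (A)---in particular the existence and integrability of the requisite matrix derivatives of the mixture density---delivers both the negative-definite Hessian limit and the controlled remainder is the crux; the remaining domination argument is routine.
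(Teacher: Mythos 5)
Your route is the same as the paper's---both reduce the theorem to Theorem~1 of \cite{fan2001variable}---but the effort is distributed in exactly the opposite way. You write out in full the generic Fan--Li mechanism (the ball-boundary inequality $\sup_{\|u\|=C}Q_n(\psi_0+n^{-1/2}u)<Q_n(\psi_0)$, the score/Hessian expansion, and the observation that $\lambda n^{-1/2}=O_p(n^{\eta-1/2})=O_p(1)$ so the norm-penalty increment is absorbed by the negative quadratic), all of which the paper compresses into the single sentence that the argument ``can be directly adapted.'' That part of your write-up is correct, and your Lipschitz treatment of the penalty is a clean way to handle $\ell_1$ and $\ell_2$ simultaneously. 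The gap is the step you yourself defer as ``the crux'': you never verify that the Fisher information at $\psi_0$ is positive definite, and for a finite mixture this is not routine regularity---it fails outright if the mixture is unidentifiable, which is the default situation for a mixture of matrix normals because of the Kronecker-scale ambiguity $(U_h,V_h)\mapsto(cU_h,c^{-1}V_h)$.

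That verification is essentially the entire content of the paper's proof, so your proposal as written is missing the one nontrivial ingredient. The paper's resolution is: since $\sigma_i(V_h\otimes U_h)=\sigma_{i'}(V_h)\,\sigma_{i''}(U_h)$, the eigenvalue-ratio constraint $\sigma_i(U_h)/\sigma_i(V_h)=c_h$ built into $\widebar{\Psi}^{d_1,d_2}$ in \eqref{condition:identifiability} determines the pair $(U_h,V_h)$ uniquely from the product $V_h\otimes U_h$ (so $V_h\otimes U_h=V_h'\otimes U_h'$ forces $U_h=U_h'$, $V_h=V_h'$), and identifiability of the resulting finite mixture of multivariate normals then follows from \cite{yakowitz1968identifiability}; the remaining Fan--Li conditions are delegated to condition~(A) and to compactness of $\widebar{\Psi}^{d_1,d_2}$. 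To complete your argument you would need to insert this identifiability step (or an equivalent one) before asserting $u^{\top}I(\psi_0)u\ge c\|u\|^2$, and, as you correctly suspect, also note that the expansion must be taken in a parametrization of the constrained manifold $\widebar{\Psi}^{d_1,d_2}$ rather than of the ambient space.
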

\begin{proof}
	The proof can be directly adapted from the argument of Theorem 1 proposed by \cite{fan2001variable}. It suffices to check the conditions in their proof. For the first condition, all the assumptions are trial except the identiability issue. Actually, since $\sigma_i(V_h\otimes U_h) = \sigma_{i'}(V_h) \sigma_{i''}(U_h)$, by fixing the ratio of eigenvalues as shown in \eqref{condition:identifiability}, there exists a unique eigenvalue pair of $\sigma_{i'}(V_h), \sigma_{i''}(U_h)$ for a given value of $\sigma_i(V_h\otimes U_h).$ Thus $V_h \otimes U_h = V_h' \otimes U_h'$ implies $V_h = V_h'$ and $U_h = U_h'.$ The identifiablity property then directly follow given the results from \cite{yakowitz1968identifiability}. For the second condition, our assumption (A) directly implies that. For the last condition, it holds from the compactness of the parameter space $\widebar{\Psi}^{d_1, d_2}.$
\end{proof}

\section{Simulations}
\label{section:Simulation}
\subsection{Results on Choosing the Number of Clusters}
\label{section:sim1}
In this section, we evaluate the effectiveness of the proposed cross validated penalized likelihood (CVPL) in different scenarios. We generate two clusters of signal that follow matrix normal distribution with mean structures shown in Figure~\ref{fig:sim1_squarecross}. The row-wise and column-wise covariance matrices follow an autoregressive setting where ${\it cov} \{Y_{k_1, l_1}, Y_{k_2, l_2} \} = 0.9^{|k_1 - k_2| + |l_1 - l_2|}, 1 \leq k_i \leq r, 1 \leq l_i \leq p.$ The proportion for both of the clusters is equal. In Scenario I, we set the number of signals $n = 100$ with $r = p = 60.$ In Scenario II, we let $n = 50, ~r=p=30$. $200$ simulations were conducted for each of the two cases. 
\begin{figure}[H]
	\centering
	\begin{tabular}{cc}
		\includegraphics[width = .45\textwidth, height = 0.25\textheight]{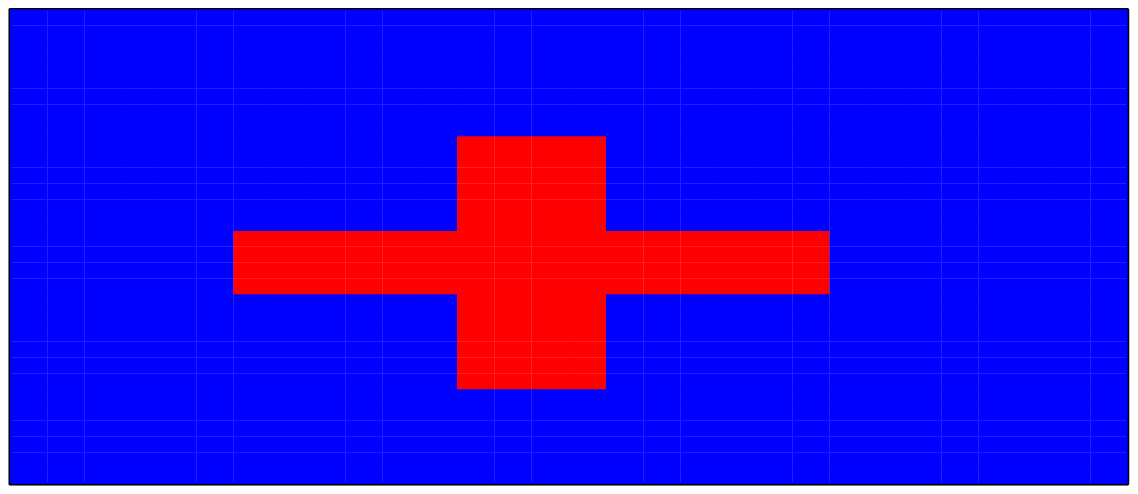}&
		\includegraphics[width = .45\textwidth, height = 0.25\textheight]{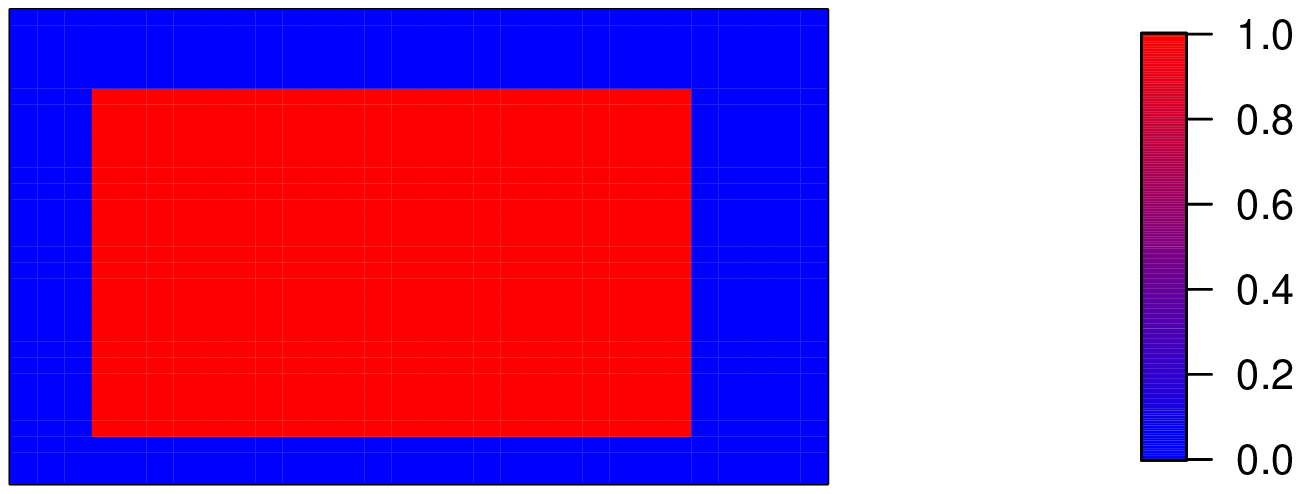}
	\end{tabular}
	\caption{The mean structure of the two clusters.}
	\label{fig:sim1_squarecross}
\end{figure}
We applied the proposed method to the simulated dataset. $L_1, L_2$ and Nuclear penalties were all implemented. As is shown in Table~\ref{sim1:table1}, among all the penalties, $\lambda$ and sample sizes, the proposed CVPL values suggest the true number of cluster. Comparing $L_1$ with $L_2$ penalty in Scenario I, the outperformance of $k=2$ among all the other clusters are higher with $L_1$ penalty, which results from the sparsity of the two mean structures. When the sample size decreases as in Scenario II, such pattern becomes less obvious. It shows that the smaller dimension of images attenuates the discrepancy between $L1$ and $L_2$ regularizations. In the setting of Nuclear regularization, the proposed CVPL value leads to the true number of clusters, which is due to the low rank of mean structures. 
\begin{table}[h]
	\centering
	\caption{The cross validated penalized likelihood (CVPL) values obtained from different number of clusters and penalties under two scenarios. }
	\label{sim1:table1}
	\begin{tabular}{cccccccc}
		\hline \hline
		\multirow{2}{*}{Penalty} & \multicolumn{1}{l}{\multirow{2}{*}{$\lambda$}} & \multicolumn{3}{c}{CVPL (Scenario I)} &\multicolumn{3}{c}{CVPL (Scenario II)}  \T\B \\ \cline{3-8}
		& \multicolumn{1}{l}{}                                    & $k=2$            &         $k = 3$     &     $k = 4$      & $k=2$            &         $k = 3$     &     $k = 4$      \T   \\ \hline
		\multirow{3}{*}{L1}                    & 0.5                                                     &           $2.345^*$      &       2.337    &      2.333      &        ${0.458^*}$         &     0.453    &       0.451 \T  \B \\
		& 1                                                      &        ${ 2.344^*}$         &     2.336        &        2.330     &        ${0.457^* }$         &     0.455   &       0.452      \T \B \\
		& 1.5                                                       &         ${2.341^*} $         &          2.337  &    2.332     &         ${0.458^*} $         &          0.457   &   0.455     \T \B   \\ \hline
		\multirow{3}{*}{L2}      & 0.5                                                       &            ${2.351^*}$       &        2.349    &      2.344                  &            ${0.462^*}$       &        0.449    &     0.431    \T   \B \\
		& 1                                                       &        ${ 2.352^*}$           &         2.350   &       2.345   &        ${ 0.450^*}$           &         0.434   &      0.419   \T  \B \\
		& 1.5                                                       &        ${2.352^*} $          &           2.349  &        2.344          &        ${0.446^*} $          &           0.429  &      0.413  \T \B  \\ \hline
		\multirow{3}{*}{Nuclear} & 0.5                                                       &      ${2.351^*}$             &     2.348      &     2.343                           &      ${0.461^*}$             &   0.456       &     0.452    \T \B \\
		& 1                                                       &         ${2.351^*}$          &     2.348       &     2.344    &         ${0.461^*}$          &     0.457      &    0.452    \T \B  \\
		& 1.5                                                       &        ${2.353^*}$           &       2.349      &   2.345              &        ${0.460^*}$           &       0.456   &  0.454     \T \B    \\ \hline\hline
		\multicolumn{8}{l}{* The highest values across different scenarios ($\times 10^5$)}  \T  \B
	\end{tabular}
\end{table}

\subsection{Results on Comparing with K-Means}
This section is contributed to compare the proposed approach with K means. Similar to Section~\ref{section:sim1}, we generated signals using the same mean and covariance structures. In Scenario III, the sample size is set to be $50$ and the dimension of images $20 * 20$. In Scenario IV, we increase the sample size to $100$ and the dimension to $60 * 60$. To compare the results obtained from the two underlying approaches, we calculate the adjusted random index \citep{milligan1986study} and accuracy. We repeat the procedure $200$ times for this simulation study.

Results are summarized in Table~\ref{sim2:table2}. In Scenario III where the size is relatively low, the benefit of the proposed method is critical compared to K means. The ARI and accuracy values are almost double of the results obtained from K means. When it comes to larger sample size, which is presented as Scenario IV, the gain is also apparent. Among all the regularizations, the $L1$ penalty performs superiously due to the sparsity of the generated signals. 
\begin{table}[H]
	\caption{The adjusted random index (ARI) and accuracy obtained from the proposed method and K means under Scenario III and IV.}
	\label{sim2:table2}
	\begin{small}
		\hskip-1.0cm
		\begin{tabular}{cccccccccc}
			\hline \hline
			\multirow{2}{*}{Penalty} & \multicolumn{1}{l}{\multirow{2}{*}{$\lambda$}} & \multicolumn{2}{c}{ARI (Scenario III)}                                  & \multicolumn{2}{c}{Accuracy}   &\multicolumn{2}{c}{ARI (Scenario IV)}                                  & \multicolumn{2}{c}{Accuracy}                      \T \B          \\ \cline{3-10} 
			& \multicolumn{1}{l}{}                        & \multicolumn{1}{l}{our method} & \multicolumn{1}{l}{kmeans} & \multicolumn{1}{l}{our method} & \multicolumn{1}{l}{kmeans}                   & \multicolumn{1}{l}{our method} & \multicolumn{1}{l}{kmeans} & \multicolumn{1}{l}{our method} & \multicolumn{1}{l}{kmeans}                           \T \B \\ \hline
			\multirow{4}{*}{L1}      & 0                                           & 0.867                 & \multirow{4}{*}{0.513} & 0.882                     & \multirow{4}{*}{0.626}  
			& 0.644                    & \multirow{4}{*}{0.517} & 0.696                 & \multirow{4}{*}{0.607} 
			\T \B \\
			& 0.5                                         & 0.924                     &          &           0.938                      &                & 0.691                  &                            & 0.744                 &          
			\T \B       \\
			& 1                                           & 0.962                      &                            & 0.980                     &        &
			0.781                  &                            & 0.822                   &      
			\T \B            \\
			& 1.5                                         & 0.966                      &                            & 0.985                   &         & 0.788                   &                            & 0.824               &                     \T \B           \\ \hline
			
			\multirow{3}{*}{L2}      & 0.5                                         & 0.879                      &                            & 0.892                        &     & 0.632                &                            & 0.687                    &         \T \B                 \\
			& 1                                           & 0.907                    & 0.514                 & 0.918                 & 0.623     & 0.665                  & 0.518               & 0.715                     & 0.607      \T \B          \\
			& 1.5                                         & 0.868                      &                            & 0.881              &                   & 0.788                   &                            & 0.824             &           \T \B             \\ \hline
			\multirow{3}{*}{Nuclear} & 0.5                                         & 0.898                         &                            & 0.909                     &                 & 0.645                &                            & 0.697                   &              \T \B                 \\
			& 1                                           & 0.860                    & 0.515                & 0.876              & 0.623      & 0.660                    & 0.516           & 0.710                 & 0.607       \T \B            \\
			& 1.5                                         & 0.884                    &                            & 0.897                 &                 & 0.636                    &                            & 0.687             &                  \T \B           \\ \hline \hline
		\end{tabular}
	\end{small}
\end{table}

\section{Analysis of Odor Memory Data}
\label{section:real1}
In this section, we focus on analyzing a LFP dataset from a memory coding experiment on non-spatial events \citep{allen2016nonspatial}. Rats were trained to identify a series of five odors during the experiment. Each of the odors was presented through an odor port. In most of the cases, those five odors were in the same sequence (``{\it in-sequence}" odors) while there were some violations (``{\it out-sequence}" odors). For example, odor sequence {\it ABCDE} is an ``in-sequence" odor yet {\it ABBDE} is an ``out-sequence" odor. Rats were required to poke and hold their nose in the port to correctly identify whether the odors were ``in" or ``out" sequence. Throughout the experiment, spike and LFP data were collected. 22 electrodes were implanted in the CA1 pyramidal layer of the dorsal hippocampus, among which we only focus on 12 electrodes exhibiting task-critical single-cell activity. The whole LFP dataset contains 247 trials with a sampling rate 1000 Hertz and $T=2000$ time points. Figure~\ref{real:ts} exposes a snapshot of the LFP signals across 12 electrodes. 
\begin{figure}[th!] \centering
	\includegraphics[scale = 0.8]{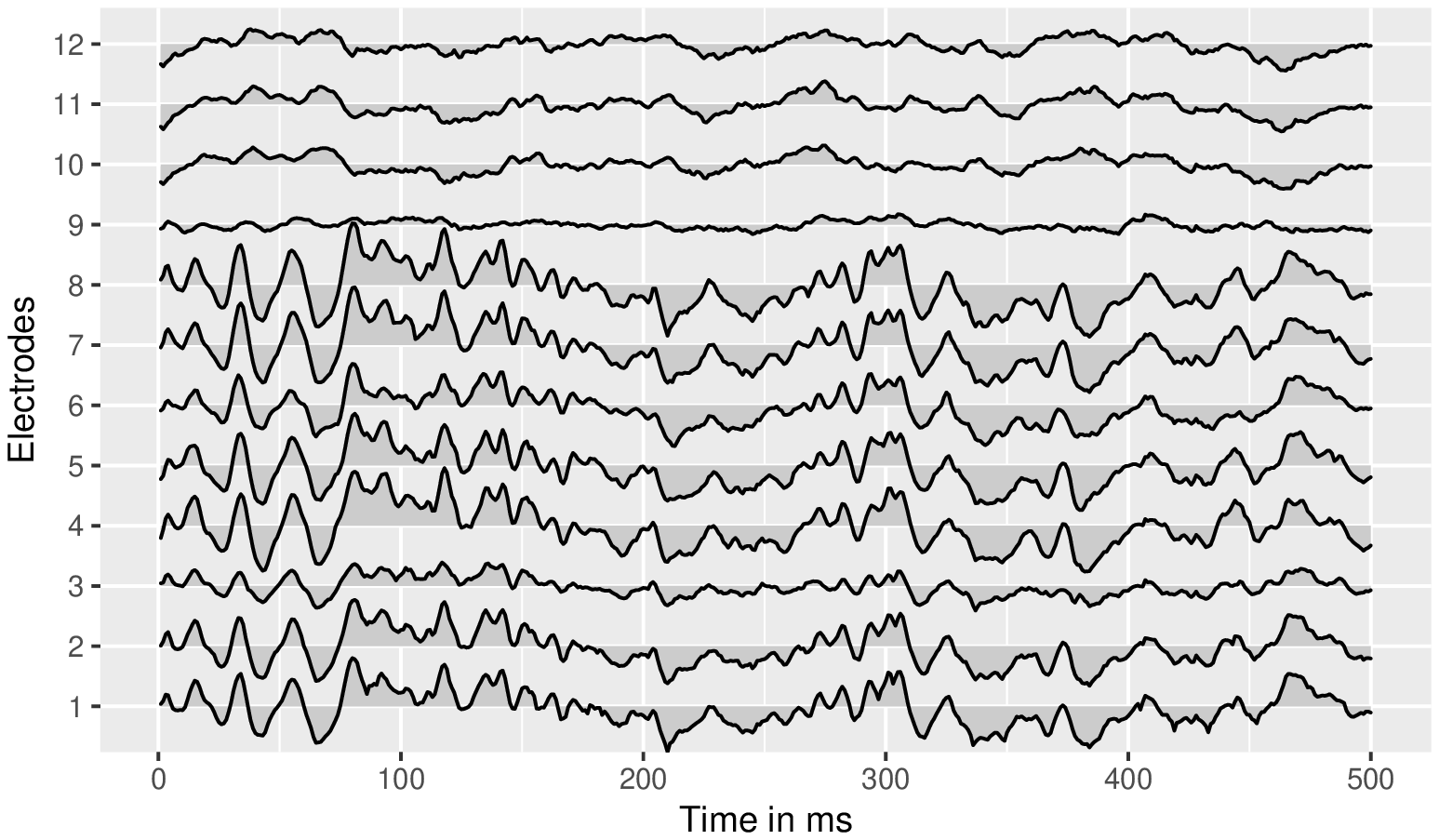} 
	\caption{Time series plot of LFP signals across 12 electrodes in trial 1. The plot only presents the first 500 time points.}
	\label{real:ts}
\end{figure}

\subsection{Time Domain Analysis on Imaging Clustering}
\label{section:real_time}
We applied the proposed clustering method to the LFP dataset with 247 trials to identify underlying patterns. As an initial step, we focus on time domain to uncover the association between raw multi-channel signals with ``in-sequence" or ``out-sequence" patterns. 
We implemented the proposed method to the raw LFP signals across all the 247 trials. 
\begin{table}[H]
	\centering
	\caption{The cross validated penalized likelihood values and the adjusted random index obtained across different number of clusters among all the three penalties.}
	\label{real:table1}
	\begin{tabular}{ccccccc}
		\hline \hline
		\multirow{2}{*}{Penalty} & \multicolumn{1}{l}{\multirow{2}{*}{$\lambda$}}   &\multicolumn{3}{c}{CVPL} &  \multicolumn{2}{c}{ARI} \T\B \\ \cline{3-7}
		& \multicolumn{1}{l}{}                                    & $k=2$            &         $k = 3$     &     $k = 4$ & our method & K means \T \B  \\ \hline
		\multirow{3}{*}{L1}      &0 &  1.290* & 1.285 & 1.281  & 0.768 &  \T \B \\              & 0.5                                                     &           1.253    &       1.253*    &    1.246  & 0.786 &   \multirow{3}{*}{0.499}  \T  \B \\
		& 1                                                      &        1.243*        &    1.206     &       1.204& 0.768  &  \T \B \\
		& 1.5                                                       &        1.249*      &      1.234 &    1.218 & 0.780 &  \T \B   \\ \hline
		\multirow{3}{*}{L2}      & 0.5                                                       &   1.302*         &        1.107  &    1.240  & 0.768 & \multirow{4}{*}{0.510} \T   \B \\
		& 1                                                       &      1.301*      &       1.027  &     1.202  & 0.774 & \T  \B \\
		& 1.5                                                       &      1.298*       &      1.189  &        1.235 &0.756 &  \T \B  \\ \hline
		\multirow{3}{*}{Nuclear} & 0.5                                                       &      1.309*     &     1.299     &  1.274  & 0.756 & \multirow{4}{*}{0.498} \T \B \\
		& 1                                                       &         1.299*        &    1.287   &   1.277 & 0.733 &  \T \B  \\
		& 1.5                                                       &      1.290*         &       1.286     &   1.214  &0.711 & \T \B    \\ \hline \hline
		\multicolumn{7}{l}{* The highest CVPL value ($\times 10^5$). } \T \B\\
	\end{tabular}
\end{table}

Table~\ref{real:table1} summarizes the cross validated penalized likelihood values among different number of clusters and penalties. It is obvious that 2 clusters are mostly suggested especially in the case of L2 or nuclear norm regularization. These findings motivate us to further investigate the cluster results with respect to the ``in/out sequence" patterns. Table~\ref{real:table1} shows such association. The adjusted random index was related to the true  label of ``in/out sequence" patterns. Comparing to K means, the proposed method outperforms in detecting the latent structure representing ``in" or ``out" sequences. 
Filter the LFPs by all the ``in-sequence'' signals.

As a further step, researchers are also interested in understanding how LFP signals are related to rat's correctness in this experiment. Due to the small size of ``out" sequence trials, we only focus on those ``in" sequence trials. In this way, we are able to investigate on the ``sensitivity" (true positive rate) of the experiment. 
\begin{table}[H]
	\centering
	\caption{The cross validated penalized likelihood values obtained across different number of clusters on all the ``in-sequence" trials.}
	\label{real:table3}
	\begin{tabular}{cccccccc}
		\hline \hline
		\multirow{2}{*}{Penalty} & \multicolumn{1}{l}{\multirow{2}{*}{$\lambda$}}   &\multicolumn{4}{c}{CVPL}  &  \multicolumn{2}{c}{ARI}\T\B \\ \cline{3-8}
		& \multicolumn{1}{l}{}                                    & $k=2$            &         $k = 3$     &     $k = 4$  & $k=5$   & our method & K means \T   \\ \hline
		\multirow{4}{*}{L1}         &0 & 1.135* & 1.135* & 1.126 & 1.131  & 0.762 & \multirow{4}{*}{0.506 } \T \B \\         
		& 0.5                                                     &           1.103*     &       1.076   &    1.084 & 1.094* & 0.783 &  \T  \B \\
		& 1                                                      &        1.099*        &    1.070   &       1.077
		&1.136*& 0.783 &   \T \B \\
		& 1.5                                                       &        1.107*      &     1.1078 &    1.118* 
		& 1.068  &0.609 & \T \B   \\ \hline
		\multirow{3}{*}{L2}      & 0.5                                                       &   1.142*         &        1.139  &  0.885 &  1.144* &0.769 &\multirow{3}{*}{0.499 } \T   \B \\
		& 1                                                       &      1.139*      &       1.016  &     1.101*
		&  0.986 & 0.743 &  \T  \B \\
		& 1.5                                                       &      1.150*       &    0.865  &       1.016
		& 1.061* &0.762 &  \T \B  \\ \hline
		\multirow{3}{*}{Nuclear} & 0.5                                                       &      1.159*     &     1.125    &  1.119 &1.126* &0.769 & \multirow{3}{*}{0.498} \T \B \\
		& 1                                                       &        1.153*        &   1.116  &   1.136* &  1.105  & 0.756 & \T \B  \\
		& 1.5                                                       &      1.141*        &     1.142*   &   1.036 & 1.123 & 0.783 & \T \B    \\ \hline \hline
		\multicolumn{6}{l}{* The top two CVPL values ($\times 10^5$). \T \B}
	\end{tabular}
\end{table}
Table~\ref{real:table3} shows the cross validated penalized likelihood obtained from the proposed approach. Among all the regularizations and $\lambda$ values, $k=2$ stands out among all the possible clusters. These results inspire us to further study the consistency between cluster results and the ``correctness" of this experiment. Table~\ref{real:table3} also presents the adjusted random index in relation to the ``correctness" labels. Compared to K means, our proposed approach is able to successfully identify the rat's ``correctness" on identifying ``in/out" sequences. It is worth mentioning that in addition to 2 clusters, Table~\ref{real:table3} also suggests 5 clusters. These results indicate our approach can possibly identify the five different odors. We will shed light on this direction in the next section.
\subsection{Time Frequency Clustering Analysis}
We will continue to uncover the latent structure carried from the LFP dataset.  \cite{allen2016nonspatial} suggests two oscillatory bands (Theta: 4 - 12 Hertz and Slow Gamma: 20 - 40 Hertz) yield strong power and playing significant roles in detecting the ``in/out" sequences. Figure~\ref{fig:real_tf} shows the time frequency plot on Theta and Slow Gamma bands. Although these two bands enjoy the most power, low frequency theta band apparently obtains much more than slow gamma bands. It has been shown that slow gamma bands were strongly modulated by the ``in" and ``out" pattern \cite{allen2016nonspatial}. In this study, to take one step further, we applied the proposed method to the spectrum of Theta and Slow Gamma bands separately.  
\begin{figure}[H]
	\centering
	\includegraphics[width = .8\textwidth, height = 0.4\textheight]{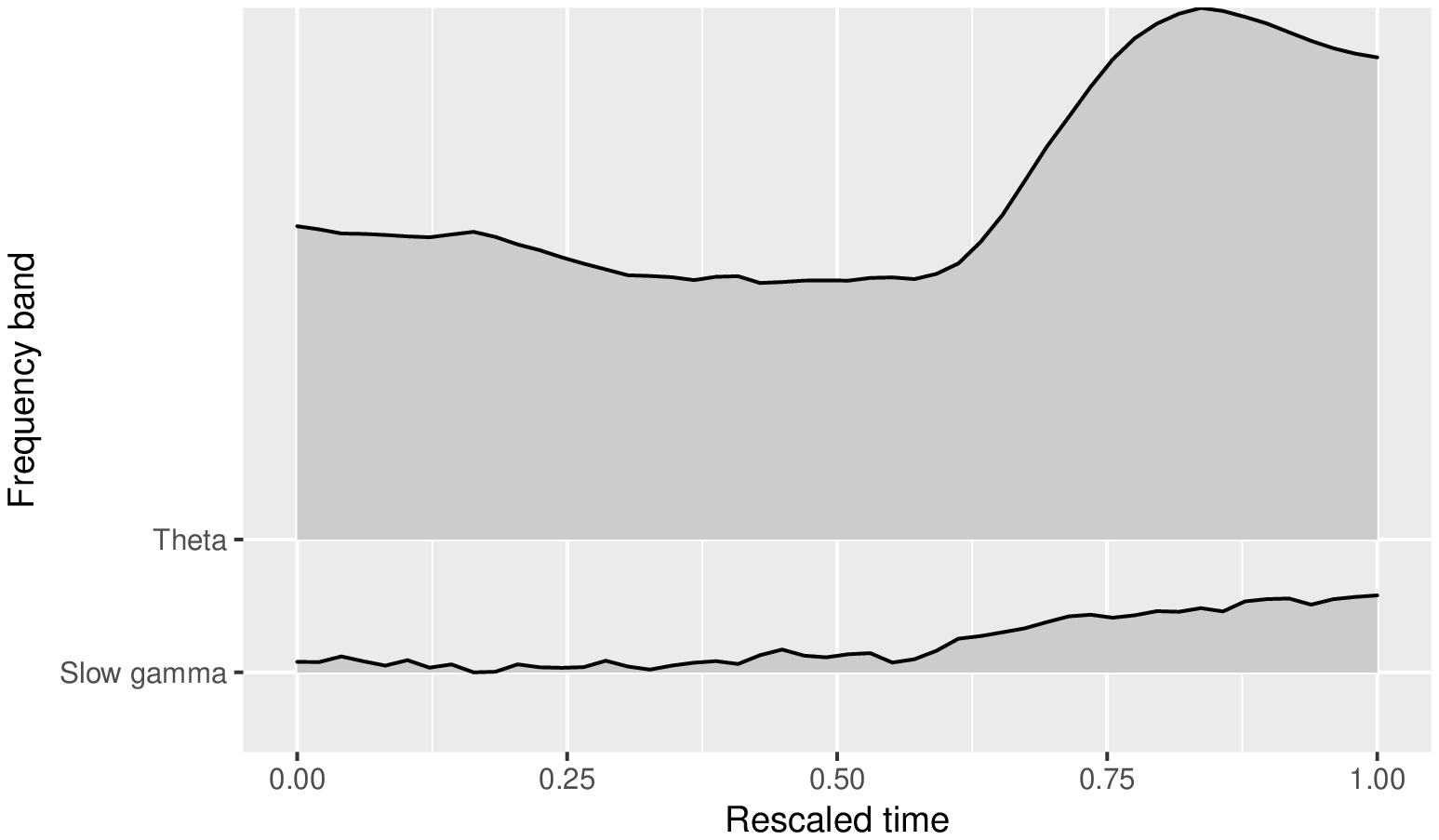}
	\caption{The time frequency plot of Theta and Slow Gamma bands over the ``in-sequence" trials.}
	\label{fig:real_tf}
\end{figure}
Table~\ref{real:table5} presents the results after implementing the proposed method to the spectrum on Theta band. It can be easily found that for each regularization setting, 4 or 5 clusters are highly suggested. We further compare the 5 cluster results with the true odor sequence. As is shown in Table~\ref{real:table5}, the consistency is strong especially when comparing wit K means. Our approach provides some evidence indicating the association between the low frequency band (Theta) and the odor sequence. 
\begin{table}[H]
	\centering
	\caption{The cross validated penalized likelihood obtained from the ``in-sequence" trials. The spectrum are from Theta band. }
	\label{real:table5}
	\begin{tabular}{cccccccc}
		\hline \hline
		\multirow{2}{*}{Penalty} & \multicolumn{1}{l}{\multirow{2}{*}{$\lambda$}}   &\multicolumn{4}{c}{CVPL}   &\multicolumn{2}{c}{ARI}     \T\B \\ \cline{3-8}
		& \multicolumn{1}{l}{}                                    & $k=2$            &         $k = 3$     &     $k = 4$  & $k=5$    & our method & K means \T   \\ \hline
		\multirow{3}{*}{L1}                    & 0                                                    &    11.001       &      11.300*    &  11.198*  & 11.172 &0.712 & \multirow{4}{*}{0.679}  \T  \B \\          & 0.5                                                     &           8.516     &      8.975*    &    8.849 & 8.997*  & 0.692 &  \T  \B \\
		& 1                                                      &        8.650     &   8.632     &      8.725*
		&8.745* & 0.703&   \T \B \\
		& 1.5                                                       &        8.571     &     8.705* &   8.556
		& 8.701* & 0.709 &  \T \B   \\ \hline
		\multirow{3}{*}{L2}      & 0.5                                                       &   8.965*         &        8.881* &  8.671 &  7.277 & 0.693 &\multirow{3}{*}{0.672} \T   \B \\
		& 1                                                       &      8.719*   &     8.388 &   8.544*
		&  7.616& 0.686 &  \T  \B \\
		& 1.5                                                       &      8.650     &    8.632  &       8.825*
		& 8.745* & 0.682 &  \T \B  \\ \hline
		\multirow{3}{*}{Nuclear} & 0.5                                                       &     9.034     &     9.196*    &  9.183 & 9.259* & 0.707 &\multirow{3}{*}{0.671}  \T \B \\
		& 1                                                       &       9.013        &   9.166  &   9.255* &  9.263* &0.714 & \T \B  \\
		& 1.5                                                       &     8.571      &     9.040*   &  8.995* & 8.969
		& 0.712 & \T \B    \\ \hline \hline
		\multicolumn{8}{l}{* The top two highest values ($\times 10^3$).} \T \B
	\end{tabular}
\end{table}

Further, we concentrate on the Slow Gamma band. \cite{allen2016nonspatial} has established the conclusion that slow gamma band strongly aligned with the ``in/out" pattern. In this part, we applied the proposed method to all the ``in-sequence'' trials to uncover latent patterns. Table~\ref{real:table7} summarizes the cross validated penalized likelihood values among different clusters. 2 clusters are being recommended in most of the cases. We later compare the cluster result with the ``correctness" labels. In the case of nuclear norm regularization, the adjusted random index (0.5733) is almost $20\%$ higher than K means (0.497). 
\begin{table}[H]
	\centering
	\caption{The cross validated penalized likelihood obtained from the ``in-sequence" trials. The spectrum are from Slow Gamma band.}
	\label{real:table7}
	\begin{tabular}{cccccc}
		\hline \hline
		\multirow{2}{*}{Penalty} & \multicolumn{1}{l}{\multirow{2}{*}{$\lambda$}}   &\multicolumn{4}{c}{CVPL}  \T\B \\ \cline{3-6}
		& \multicolumn{1}{l}{}                                    & $k=2$            &         $k = 3$     &     $k = 4$  & $k=5$  \T   \\ \hline
		\multirow{3}{*}{L1}                    & 0.5                                                     &           8.470*   &      8.395    &    8.064 & 7.993 \T  \B \\
		& 1                                                      &        8.129*    &   8.023   &      7.507
		&7.312  \T \B \\
		& 1.5                                                       &       7.689*    &     7.641 &   7.215
		& 6.765 \T \B   \\ \hline
		\multirow{3}{*}{L2}      & 0.5                                                       &   8.360*         &        7.933 & 7.980  &  7.660 \T   \B \\
		& 1                                                       &      7.977*   &     7.755 &   5.744
		&  6.977 \T  \B \\
		& 1.5                                                       &     7.696     &    7.754*  &      6.584
		& 6.502 \T \B  \\ \hline
		\multirow{3}{*}{Nuclear} & 0.5                                                       &     8.687     &     8.785*    &  8.531 & 8.373 \T \B \\
		& 1                                                       &       8.686*     &   8.416  &   8.532 &  8.183 \T \B  \\
		& 1.5                                                       &     8.534*      &    8.438   & 8.324 & 7.981 \T \B    \\ \hline \hline
		\multicolumn{6}{l}{* The highest values ($\times 10^3$).} \T \B 
	\end{tabular}
\end{table}

\section{Analysis of Rat Stroke Data}
\label{section:real2}
In this section, we apply the proposed approach to another LFPs dataset from a rat stroke experiment. In this study, LFPs were recorded before and after the stroke. 32 electrodes were implanted with 4 layers shown in Figure~\ref{real2:schematic}. Throughout this section, we work on the signals of 5 minutes before and after the stroke. The sampling rate is 1000 Hertz and each epoch is 1 second long. One of the scientific interests from this experiment is to identify the ``latent" patterns that lead to before and after stroke. 
\begin{figure}[H] 
	\centering
	\includegraphics[scale = 0.35]{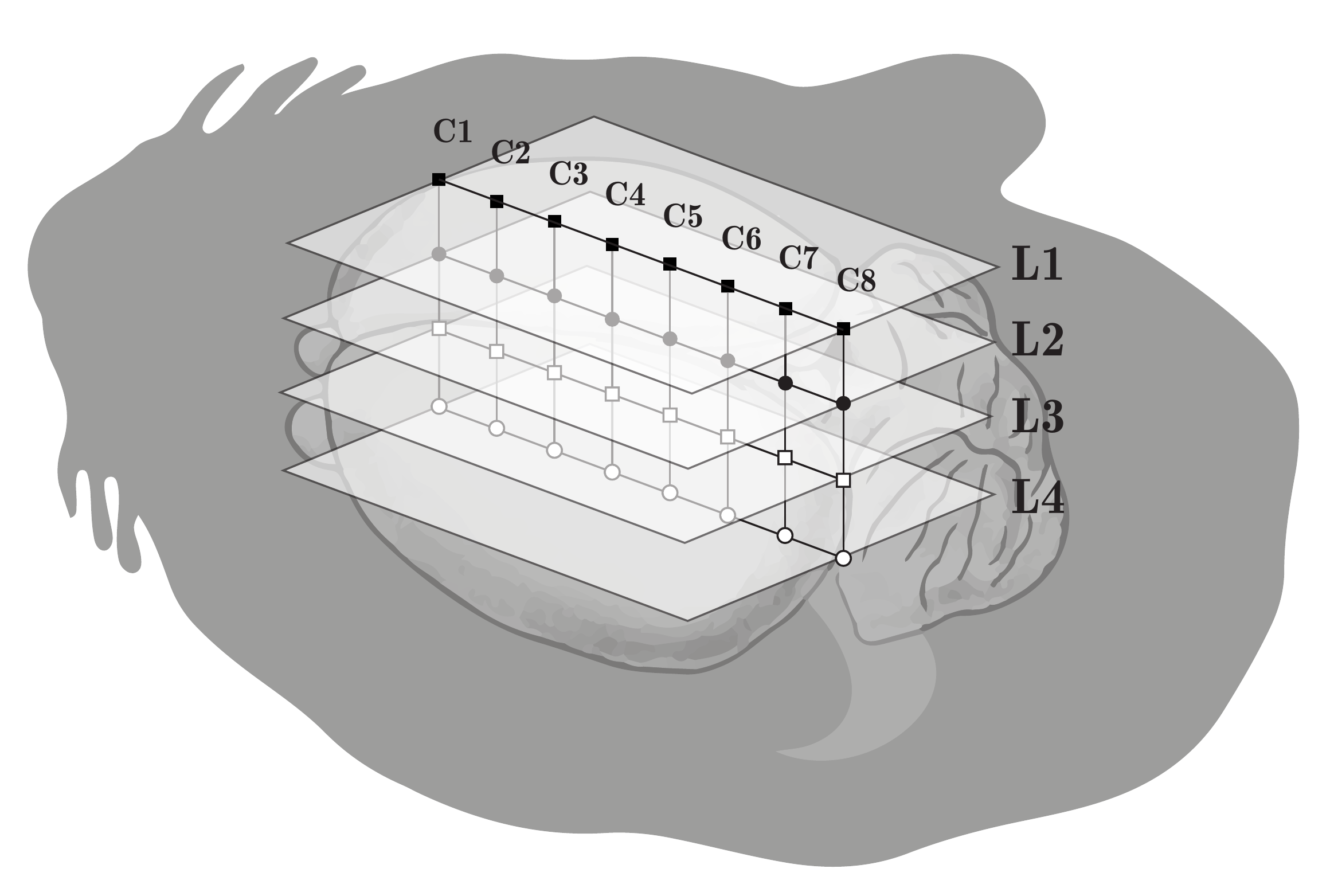} 
	\caption{The schematic diagram of electrodes implanted in rat brain.}
	\label{real2:schematic}
\end{figure}
As preliminary analysis, we implemented time frequency analysis on this dataset. Figure~\ref{real2:timefreq} shows the log power spectra of two typical channels. These results were obtained by averaging all the trials before and after stroke separately. Most of the channels behaves ``smoothly" within each epoch and there exists small discrepancy before and after stroke. However, just like the case of Channel 10, some channels presents nonnegligible dynamics and obvious difference between and after stroke. These findings shows that it is not optimal to average over or vectorize all the channels when we do cluster analysis to identify the ``latent" pattern before and after stroke.

\begin{figure}[H] 
	\begin{tabular}{cc}
			\hskip -2.1cm
		\includegraphics[scale = 0.35]{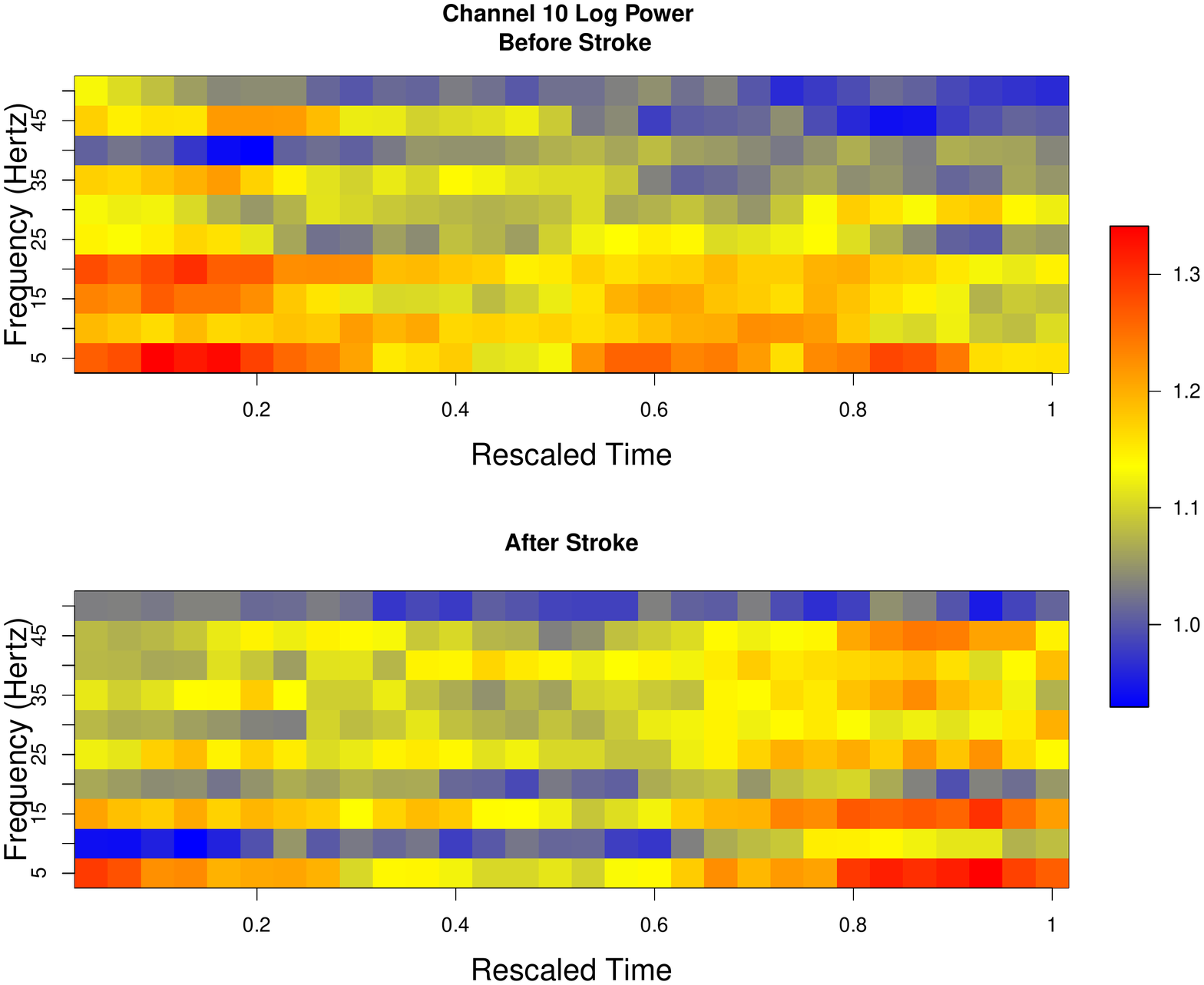} & 	\includegraphics[scale = 0.35]{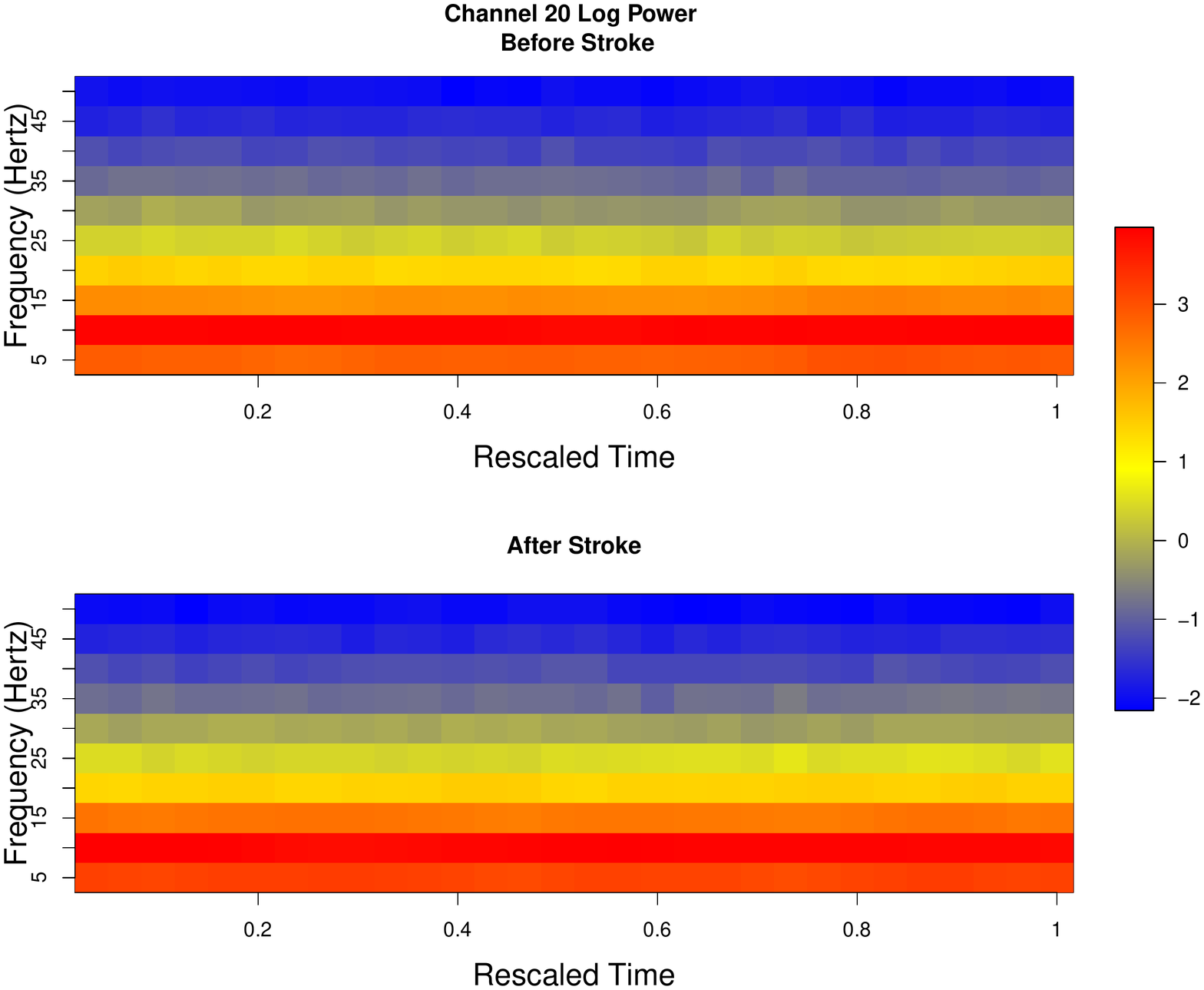}
	\end{tabular}
	\caption{The time frequency plot of Channel 10 and 20 among all the 600 trials before and after the stroke.}
	\label{real2:timefreq}
\end{figure}
To deepen the preliminary findings and motivate our proposed approach, we also study the dynamics across all the 32 channels before and after stroke. Figure~\ref{real2:timefreqtwobands} is the time frequency plot of Beta and Slow Gamma frequency bands across the channels. The log power spectra were obtained by averaging over the trials. Among the plots before and after stroke, we observe strong dependence across channels both for the two bands. This demonstrates the importance of introducing regularization terms into the mixture normal model. Comparing the plots before and after stroke, local discrepancy is easily identified. Such difference will be easily ignored if we just naively vectorize the original signals when doing cluster analysis. 
\begin{figure}[H] 
	\begin{tabular}{cc}
			\hskip -2.1cm
		\includegraphics[scale = 0.35]{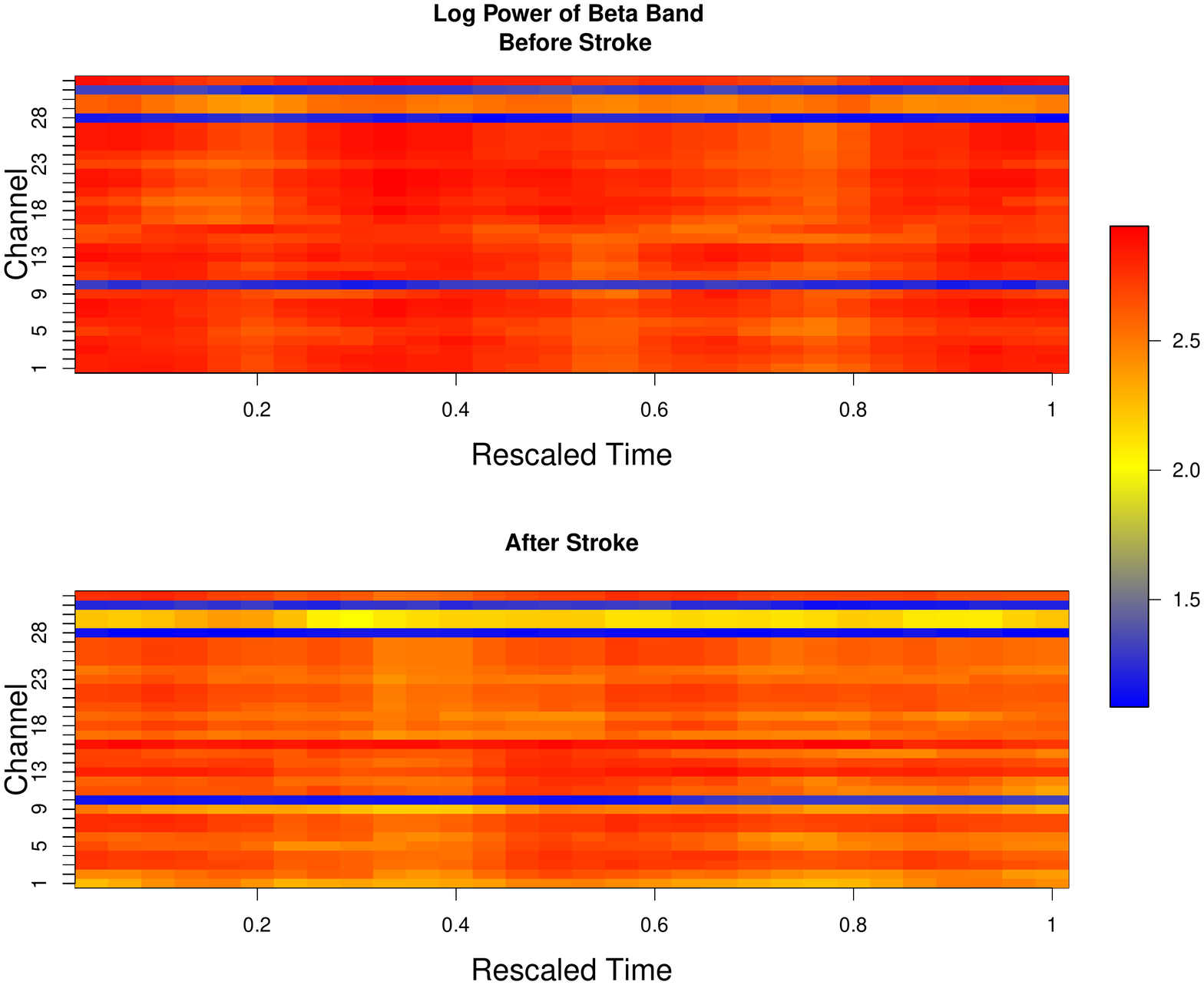} & 	\includegraphics[scale = 0.35]{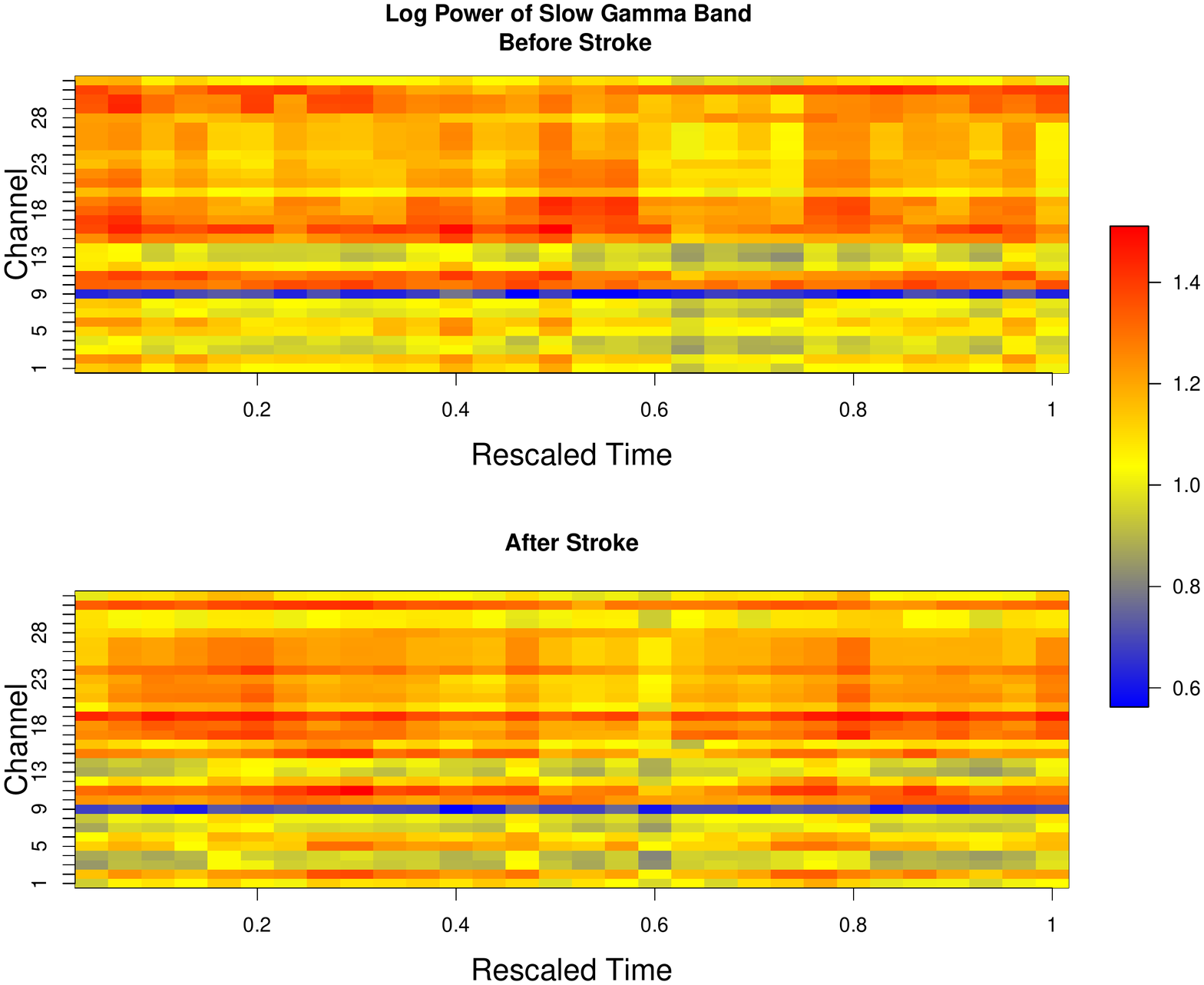}
	\end{tabular}
	\caption{The time frequency plot of particular frequency bands among all the channels before and after stroke.}
	\label{real2:timefreqtwobands}
\end{figure}
We applied the proposed approach to the time frequency images across all the trials before and after stroke. Table~\ref{real:table9} shows the cross validated penalized likelihood values across different number of clusters and regularizations. With only one exception, all the scenarios suggest 2 clusters. As the next step, we compare the 2 cluster results with the index related to ``stroke" or ``normal". Table~\ref{real:table10} summarizes the adjusted random index values (ARI). In comparing with K means results, the proposed approach outperforms in identifying ``stroke" or ``normal" sequences. Note that as by introducing regularizations, the proposed method is able to improve the results by 80\%. In particular, Slow Gamma bands performs perfectly (ARI 1.000) when adding nuclear norm term with $\lambda = 2$. This result is almost double the case without penalty (ARI 0.507). Similar pattern can also be found in Beta band case. These findings are consistent with the conjecture in preliminary analysis. 
\begin{table}[H]
	\centering
	\caption{The cross validated penalized likelihood obtained from all the trials. The log power spectra are from Beta and Slow Gamma bands.}
	\label{real:table9}
	\begin{tabular}{cccccccccc}
		\hline \hline
		\multirow{2}{*}{Penalty} & \multicolumn{1}{l}{\multirow{2}{*}{$\lambda$}}   &\multicolumn{4}{c}{CVPL (Slow Gamma)} &\multicolumn{4}{c}{CVPL (Beta)}    \T\B \\ \cline{3-10}
		& \multicolumn{1}{l}{}                                    & $k=2$            &         $k = 3$     &     $k = 4$  & $k=5$  & $k=2$            &         $k = 3$     &     $k = 4$  & $k=5$  \T   \\ \hline 
		\multirow{3}{*}{L1}                    & 0                                                   &           2.941  &      2.964*    &    2.764 & 2.822  & 4.645* & 4.627 & 4.526 & 4.598 \T  \B \\
		& 1                                                      &        2.472*    &   2.031   &     1.513
		&0.4213 & 3.98* & 3.268 & 3.594 & 1.676 \T \B \\
		& 2                                                       &       2.106*    &     1.370 &  1.288
		& 0.621 & 4.167* & 3.373 &3.227 & 3.277 \T \B   \\ \hline
		\multirow{3}{*}{L2}      & 0.5                                                       &   2.688*         &        2.474 & 2.306   &  2.184& 4.245* & 4.179 & 4.036 &3.424 \T   \B \\
		& 1                                                     &   2.484*         &        2.188 & 1.787  &  1.895 &4.063* &3.557 & 3.429 & 3.329 \T  \B \\
		& 2                                                       &     2.338*     &    2.163  &      1.539
		& 1.733& 4.024*&3.699& 2.972& 3.206 \T \B  \\ \hline
		\multirow{3}{*}{Nuclear} & 0.5                                                       &     2.806*    &     2.627    &  2.502 & 2.303& 4.464* &4.299 &4.130 & 3.963 \T \B \\
		& 1                                                       &       2.556*    &   2.362  &   1.946 &  1.720 & 4.191* &3.977 & 3.618  & 3.371\T \B  \\
		& 2                                                       &     2.748*      &    1.689  & 1.257 & 0.684  &3.687* & 3.274& 2.795 & 2.262 \T \B    \\ \hline \hline
		\multicolumn{10}{l}{* The highest values over different frequency bands ($\times 10^4$)} \T \B
	\end{tabular}
\end{table}

\begin{table}[H]
	\centering
	\caption{The adjusted random index in relation to ``Stroke". The spectrum are from Slow Gamma and Beta bands.}
	\label{real:table10}
	\begin{tabular}{cccccc}
		\hline \hline
		\multirow{2}{*}{Penalty} & \multicolumn{1}{l}{\multirow{2}{*}{$\lambda$}} & \multicolumn{2}{c}{ARI (Slow Gamma)}    &\multicolumn{2}{c}{ARI (Beta)}                                   \T \B          \\ \cline{3-6} 
		& \multicolumn{1}{l}{}                        & \multicolumn{1}{l}{our method} & \multicolumn{1}{l}{kmeans}   & \multicolumn{1}{l}{our method} & \multicolumn{1}{l}{kmeans}        \T \B \\ \hline
		\multirow{4}{*}{L1}      & 0                                           & 0.507             & \multirow{4}{*}{0.751} & 0.887 & \multirow{4}{*}{0.716} \T \B \\
		& 0.5                                         & 0.981                &                &  0.942   &                \T \B       \\
		& 1                                           & 0.961                   &                      & 0.914 &    \T \B            \\
		& 2                                         & 0.951           &                   &0.861&     \T \B           \\ \hline
		\multirow{3}{*}{L2} & 0.5                                         & 0.951           &                & 0.941 &                \T \B                 \\
		& 1                                           & 0.951             & 0.751    & 0.878&     0.716     \T \B            \\
		& 2                                         & 0.961            &              &0.787 &                  \T \B \\    \hline
		\multirow{3}{*}{Nuclear} & 0.5                                         & 0.951            &                &0.941 &                \T \B                 \\
		& 1                                           & 0.960           & 0.751       & 0.942&  0.715     \T \B            \\
		& 2                                         & 1.000                &             &0.951 &                   \T \B           \\ \hline \hline
	\end{tabular}
\end{table}

\section{Concluding remarks}
In this paper, we proposed a regularized probabilistic clustering framework to analyze matrix data. Compared to the existing approaches such as K means, the advantages are as follows: (1.) through working directly on matrices, we are able to capture the row-wise and column-wise correlation simultaneously; (2.) by introducing penalty terms into the likelihood function, the framework is able to ``uncover" the certain sparsity nature originated from the images or signals; (3.) The proposed approach provides theoretical justification as well as straightforward interpretability with low computational cost. 

Although this paper provides some promising results, analyzing matrix data is still a open ended problem. For instance, in the current work, choosing the number of clusters highly rely on some pre-specified measures (CVPL). As an extension, we could introduce a Bayesian framework into the clustering analysis to obtain a more data driven and interpretable optimal number of clusters.



%

\bibliographystyle{chicago}
\bibliography{xg}

\end{document}